\documentclass[a4paper,UKenglish,cleveref, autoref, thm-restate ]{lipics-v2021}

%
\usepackage{graphicx}

\bibliographystyle{plainurl}

\usepackage{complexity}
\usepackage{mathtools}
\usepackage{tikz}
\usepackage{macros}
\usetikzlibrary{positioning}
\usetikzlibrary{decorations.markings}
\usetikzlibrary{matrix}
\usepackage{float}
\usepackage{algorithm, algpseudocode, float}
\usepackage{xspace}
\usepackage{graphicx}
\usepackage{amsmath}
\usepackage{amssymb}

\usepackage{hyperref}
\usepackage{cleveref}

\title{Parallel Complexity of Depth-First-Search and Maximal path in restricted
graph classes.} 
%
\author{Archit Chauhan\footnote{Part of this work was done when the author was
a PhD student at Chennai Mathematical Institute, India.}}{Department of Computer Science and Engineering, 
IIT Bombay, India}{archit@cse.iitb.ac.in}{}{}
\author{Samir Datta}{Chennai Mathematical Institute, India}{sdatta@cmi.ac.in}{}{}
\author{M. Praveen}{Chennai Mathematical Institute, India}{praveen@cmi.ac.in}{}{}

\authorrunning{A. Chauhan and S. Datta and M. Praveen} 

\Copyright{Archit Chauhan, Samir Datta, M. Praveen} 

\ccsdesc[100]{Theory of computation~Design and analysis of algorithms} 
\ccsdesc[100]{Theory of computation~Computational complexity and cryptography~Circuit complexity} 

\keywords{Parallel Complexity, Graph Algorithms, Depth First Search, Maximal
  Path, Planar Graphs, Minor-Free, Treewidth, 
Logspace} 

\category{} 

\relatedversion{} 



\acknowledgements{We would like to thank the reviewers for their insightful 
suggestions and corrections which have significantly improved this paper.}

\nolinenumbers 


\begin{document}

\maketitle              
\begin{abstract}
Constructing a Depth First Search (DFS) tree is a fundamental graph problem, 
whose parallel complexity is still not settled. 
Reif showed parallel intractability of lex-first DFS. In contrast,
randomized parallel algorithms (and more recently, deterministic quasipolynomial 
parallel algorithms) are known for constructing a DFS tree in general (di)graphs. 
However a deterministic parallel algorithm for DFS in general graphs remains an elusive goal.
Working towards this, a series of works gave deterministic NC algorithms for DFS
in planar graphs and digraphs. We further extend these results to 
more general graph classes, by providing
NC algorithms for (di)graphs of bounded genus, and
for undirected $H$-minor-free graphs where $H$ is a fixed graph with at most one crossing. 
For the case of (di)graphs of bounded tree-width,
we further improve the complexity to a Logspace bound.\\
Constructing a maximal path is a simpler problem (that reduces to DFS) for which
no deterministic parallel bounds are known for general graphs. 
For planar graphs a bound of \bigo{log $n$} parallel time on a CRCW PRAM 
(thus in NC$^2$) is known. 
We improve this bound to Logspace.

\end{abstract}

\pagebreak

\section{Introduction}
Depth First Search (\textsf{\DFS}) in undirected graphs is an archetypal graph 
algorithm whose origin is in the 19-th century work of Tr\'emaux
\cite{EvenEven21}[Chapter 3]. 
The $\DFS$ algorithm has played a pivotal role in the 
study of several problems such as biconnectivity,
triconnectivity, topological sorting, 
strong connectivity, planarity testing and embedding etc. 
Along with Breadth First Search (\textsf{BFS}) it is perhaps one of
the best known algorithms. However, while the BFS algorithm is easy to 
parallelize, 
no such deterministic parallel algorithm is known for \DFS\footnote{When talking
  about parallel computation, we will mean the class \NC. 
  In terms of \textsf{PRAM} models, a problem is said to be in the \NC{} if there exists a 
\textsf{PRAM} machine and constants $k,c$ such that the machine solves the problem in
\bigo{log$^k n$} time, using \bigo{$n^c$} many parallel processors. If the
machine is a \textsf{C}oncurrent \textsf{R}ead \textsf{C}oncurrent
\textsf{W}rite PRAM, we denote this class by \textsf{CRCW}[$\log^kn$]. 
An equivalent definition in terms of boolean circuits also exists. In particular 
\textsf{CRCW}[$\log^kn$] is equivalent to $\AC^k$
(see~\cite[Section~3.4]{KR88}).}.
In fact, Reif showed in the 1980's~\cite{Reif85}
that unless all of $\P$ is parallelizable, there is no parallel algorithm
that can mimic the familiar recursive \DFS. To be slightly more precise 
finding the lexicographically first $\DFS$-order is $\P$-complete. Roughly
contemporaneously, the problem of finding \emph{a} (not necessarily lexicographically
first) $\DFS$-order was reduced to the minimum weight perfect
matching problem for bipartite graphs in \cite{AA88} 
(and in~\cite{AAK90} for directed graphs) which coupled 
with randomized parallel algorithm for min-weight 
perfect matching \cite{KUW86,MVV87}, yielded a randomized parallel algorithm for $\DFS$.
Recent advances in matching algorithms \cite{FGT19} indeed
yield a deterministic parallel algorithm for $\DFS$ although using 
quasipolynomially many (in particular, $n^{O(\textsf{log }n)}$ many) processors.

Faced with the quest of finding a deterministic parallel algorithm (with polynomially many
processors) for $\DFS$, 
a natural question is: for which 
restricted graph classes can we, in fact, parallelize $\DFS$. Planar graphs, 
as well as digraphs 
have been known to possess (deterministic) parallel $\DFS$ algorithms dating back to 
half a century ago~\cite{Smith86,Hagerup90,Kao88,KK93,Kao95,ACD22}. 
Khuller also gave an $\NC$ algorithm for $\DFS$ in $K_{3,3}$-minor-free
graphs~\cite{Khuller90}.
The above mentioned algorithms differ in their parallel
complexity. For example
Kao, Klein in~\cite{KK93} gave a parallel running time of $O(\log^{10}{n})$ for
$\DFS$ in planar digraphs while optimizing
the number of required processors to be linear. More recently, Ghaffari et al~\cite{GGQ23}
have given a (randomized) nearly work efficient algorithm for \DFS{} running in
$\tilde{O}$($\sqrt{n}$) time and $\tilde{O}$($m+n$) work.
Our metric for evaluating an algorithm is more complexity theoretic -- so 
we try to optimise on the complexity class we can place the problem in. 
In particular, we are interested in placing the problems in the smallest possible 
class in the following (loose) hierarchy : 
$\Log \subseteq \UL \subseteq \NL \subseteq \AC^1 \subseteq \NC^2 
\subseteq \AC^3 \subseteq \ldots$ (See Chapter 6 of~\cite{arora.barak}). 

Most algorithms for parallel \DFS{} in graphs proceed via finding 
a balanced path separator in the graph, which is a path such that removing all
of its vertices leaves behind connected components (or strongly connected 
components in case of digraphs) of small size, allowing for logarithmic depth
recursion. 
It was shown by Kao in~\cite{Kao88} that all graphs and digraphs have 
balanced path separators, and it is easy to find one sequentially using the
$\DFS$ algorithm itself.
The challenge thus is to find a path separator in parallel.
Another useful result Kao~\cite{Kao93,Kao95} also showed is that if
we have a small number of disjoint paths that together constitute a
balanced separator (by small we mean up to polylogarithmically 
many, we call these \emph{multi-path separators}), then we can trim and
merge them into a single balanced path separator in parallel.
If we could somehow merge and reduce a multi-path separator consisting of a large 
number of paths, say \bigo{$n$} many, to a multi-path separator consisting of up to 
constant (or even polylogarithmically many) number of paths 
in \NC, then we would immediately have an \NC{} algorithm for \DFS{}, as the set
of all vertices of the graph constitutes a trivial multi-path
separator. Such a routine to reduce the number of paths in a multi-path separator 
is indeed the heart of the \RNC{} algorithm of Aggarwal and
Anderson~\cite{AA88} (and also~\cite{AAK90}), and this is where the randomized
routine for minimum weight perfect matching in bipartite graphs is used.
We also remark that in the literature of separators, an aspect of focus is often 
to minimize the size of the separator~\cite{LT77,Miller86}.
For our purpose of constructing a DFS tree, the length of the path/cycle 
separator is not important. 

\subsection{Our contributions}
We first look at graphs of bounded genus in~\cref{sec:sep_surface}, 
which are a natural generalization of planar graphs (planar graphs have genus
zero), and show that \DFS{} is in \NC{} for such graphs. We are not aware of any  
deterministic $\NC$ algorithm for \DFS{} even for toroidal graphs. 
We give an \NC{} algorithm for $\DFS$ which proceeds via finding a multi-path separator consisting
of \bigo{$g$} many paths, where $g$ is the genus of the graph. 
For graph classes of higher genus, the main bottleneck we encounter is that of
testing their genus and finding an embedding into a surface of that genus in $\NC$.
More precisely, if we assume that we can find genus and embeddings of  
graphs of upto \bigo{log $n$} genus in \NC,   
then our algorithm still gives an \NC{} bound 
for graphs of up to \bigo{log $n$} genus. 

In~\cref{sec:sep_scmfree} we look at \DFS{} in single crossing minor free
graphs, which are another generalization of planar graphs (orthogonal to bounded
genus graphs). 
By Wagner's theorem~\cite{Wagner1937}, we know 
that planar graphs are precisely those whose set of forbidden minors consists of 
$K_5$ and $K_{3,3}$. Both $K_5,K_{3,3}$ are single crossing graphs, i.e. they can 
be drawn on the plane with at most one crossing. We consider the family of 
$H$-minor-free graphs, where $H$ is any fixed single crossing graph. 
Such families which include in particular the families of $K_5$-minor-free
graphs and $K_{3,3}$-minor-free graphs, are called single-crossing-minor-free
graphs. They form a natural stepping stone towards more general $H$-minor-free
families, which by a deep theorem of Robertson and Seymour~\cite{RS04},
completely characterize $\emph{minor-closed}$ families. 
The only result we know of in this direction is that of
Khuller's~\cite{Khuller90} parallel algorithm for \DFS{} in $K_{3,3}$-minor-free
graphs. 
We generalize this result by giving an NC algorithm for 
DFS in single-crossing-minor-free graphs.
The algorithm of~\cite{Khuller90} uses a decomposition of $K_{3,3}$-minor-free
graphs by $2$-clique sums (or separating pairs) 
into pieces that are either planar or are exactly the graph $K_5$. 
They proceed by finding the `center' piece of this decomposition tree and then 
finding a balanced cycle separator in it, after projecting the weights of
attached subgraphs on the center piece. The high level strategy of our algorithm is
similar. We first use a decomposition theorem~\cite{RS93} by Robertson and Seymour 
which states that single-crossing-minor-free graphs can be decomposed by 
\emph{$3$-clique sums} into pieces that are either planar or of bounded treewidth. 
Then we try to find a balanced cycle separator in the center piece.
However the attachment at $3$-cliques instead of $2$-cliques poses some
technical difficulties,  
as simply projecting the weights of attached subgraphs on the 
center piece does not work. 
To deal with this, we use appropriate gadgets while projecting, 
that mimic the connectivity of the attached subgraphs. 

We remark here that the problem of finding a minimum weight perfect matching 
was shown to be in (deterministic) $\NC$ for
bounded genus bipartite graphs~\cite{DKTV12,GST20} as well as for single-crossing-minor-free 
graphs~\cite{EV21}. 
However, the reductions in~\cite{AA88,AAK90} involve adding large bi-cliques in the 
input graph, which does not seem to preserve the genus, nor the forbidden 
minors of the graph. 

In~\cref{sec:DFS_btw} we look at bounded treewidth (di)graphs, for which we are not aware of 
any explicitly stated parallel algorithms for $\DFS$. We note however that the machinery 
of Kao~\cite{Kao88} easily yields an $\AC^1$($\Log$) (which is contained in
$\AC^2$. See~\cref{sec:prelims}.) algorithm for $\DFS$ in bounded 
treewidth graphs. We improve this bound to $\Log$.
Instead of using path separators, our technique for bounded treewidth (di)graphs involves 
the parallel (in fact, Logspace) version of
Courcelle's theorem from \cite{EJT10}. 
We also use an existing result~\cite{CF12,BD15} that states that we can 
add a linear ordering to the vertices of the graph without blowing up 
the treewidth of the structure. This combined with an observation of~\cite{TK95} 
characterizing the lex-first DFS tree in terms of lex-first paths to each vertex, 
allows us to express membership of an edge to the DFS tree, in $\mso$.

Finally, in~\cref{sec:maximal_path_undir} we turn to the problem of finding a
maximal path in a planar graph, or $\maxPath$. Given an input (undirected) graph
and a vertex $r$ in it, the 
problem is to find a (simple) path starting at $r$ that cannot be extended further, i.e., 
all the neighbours of its end point other than $r$ already lie on the path.
This problem easily reduces to $\DFS$ because every root to leaf path in a $\DFS$-tree
is a maximal path. However even for this problem, no deterministic parallel
algorithm for general graphs is known, and the $\RNC$ algorithm 
by~\cite{Anderson87} goes via reduction to perfect matching.
For restricted graph classes, the situation is clearer for $\maxPath$ as
compared to $\DFS$. 
Anderson and Mayr~\cite{AM87} gave a deterministic \bigo{log$^3 n$} time 
algorithm on a \textsf{CREW PRAM}, 
for graphs of bounded degeneracy~\footnote{A graph is said to be $k$-degenerate 
if every subgraph of it has a vertex of degree at most $k$.} These include for 
example, planar graphs, and even more generally, all $H$-minor free graph classes. 
In modern terms, their algorithm witnesses a bound of $\AC^1$($\Log$). 
A better bound of \CRCW{} for \maxPath{} in \emph{planar graphs} comes from 
the $\DFS$ algorithm of Hagerup~\cite{Hagerup90}, which again can be 
improved by a modern analysis to $\ULx$~\cite{ACDM19}. 
We improve this bound to $\Log$. 
Our algorithm proceeds by first reducing the problem to finding a maximal path  
in the triconnected components of a planar graph using the $2$-clique sum 
decomposition (also referred to as the \textsf{SPQR} decomposition) 
of~\cite{DLNTW22}. We then show that in the triconnected planar components, 
a maximal path can be found in the `outerplanar layers' of the graph.
The results we prove can be summarized as follows:
\begin{theorem} \label{thm:main}
We have the following bounds:
\begin{enumerate}
\item\label{it:genusDFS}
 $\DFS$ in bounded genus graphs and digraphs is in $\ACU$.
\item\label{it:SCMDFS}
 $\DFS$ in undirected single-crossing-minor-free graphs is in $\NC$.
\item\label{itm:btwDFS} 
 $\DFS$ in directed and undirected graphs of bounded treewidth is in $\Log$
\item \label{it:maximalUPlan}
$\maxPath$ in undirected planar graphs is in $\Log$.
\end{enumerate}
\end{theorem}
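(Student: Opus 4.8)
The plan is to prove \cref{it:maximalUPlan} by constructing a single \emph{canonical} maximal path whose edges can be enumerated one at a time by a deterministic procedure that never needs to remember the portion of the path built so far. Since a maximal path may have linear length, a Logspace machine cannot store it; the whole point is therefore to make the path a deterministic function of the input (the planar graph, a fixed embedding, and the root $r$), so that to decide the $i$-th edge we can simply \emph{replay} the construction from $r$ using only a step counter and the current position, both of which fit in $O(\log n)$ bits.

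First I would reduce to the biconnected, and then to the triconnected, case. Using the block--cut decomposition together with the $2$-clique-sum (\textsf{SPQR}) decomposition of~\cite{DLNTW22}, which is computable in Logspace, the planar graph is organised into a tree of triconnected components glued along cut vertices and separating pairs. A maximal path from $r$ traverses this tree greedily: it stays inside the current component until it reaches a separating-pair (or cut) vertex through which an as-yet-unentered component hangs, at which point it descends into that component and continues. The crucial bookkeeping is that when the path enters a component through one vertex $u$ of a separating pair, it either leaves through the other vertex $v$ (so the component is ``crossed'' and the walk resumes in the parent) or it terminates inside, and in the latter case I would argue that the endpoint has no unused neighbour, so the concatenated path is maximal in the whole graph. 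Because the decomposition tree, the cut vertices and the separating pairs are all Logspace-computable, and because ``which component to descend into next'' is fixed canonically (e.g.\ by the embedding order around the pair), this greedy traversal is itself a deterministic Logspace process, \emph{provided} we can handle a single triconnected component.

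The heart of the argument is the triconnected planar case. A triconnected planar graph has an essentially unique embedding, which I would compute in Logspace and use to peel the graph into \emph{outerplanar layers} $L_0, L_1, \dots$, where $L_0$ is the outer boundary, $L_1$ the boundary remaining after deleting $L_0$, and so on; each layer bounds a nested region whose boundary is essentially a cycle. I would then show that a maximal path can be traced by hugging these layers: starting from $r$, follow the cyclic boundary order of its layer in the fixed embedding, and drop to the next inner layer exactly when the current layer can no longer be extended without repeating a vertex. The nesting of the layers is what guarantees that the traced walk is \emph{simple}, while triconnectivity together with maximality on each layer is what guarantees that it cannot be extended at its endpoint. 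Every local decision (the next edge in the rotation around a vertex, the layer index of a vertex) depends only on the embedding and on Logspace-computable reachability/distance information, so the trace is deterministic and replayable in $O(\log n)$ space.

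The main obstacle I anticipate is precisely the tension between Logspace's inability to store the path and the global constraints ``simple'' and ``maximal'': a right-hand-rule walk in a planar embedding is easy to define and to replay, but proving that the layer-based variant never revisits a vertex, and that it halts only at a vertex all of whose neighbours are already used, requires the full structural analysis of the nested outerplanar layers of a triconnected planar graph. The second delicate point is correctly splicing these per-component canonical paths across the separating pairs, so that the concatenation remains simple and remains maximal in the ambient graph and the splice point is chosen deterministically; this is exactly where the Logspace computability of the \textsf{SPQR} decomposition of~\cite{DLNTW22} is used.
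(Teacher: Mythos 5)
Your proposal addresses only \cref{it:maximalUPlan}; parts 1--3 of the theorem are not touched at all. Even for \cref{it:maximalUPlan} there are two genuine gaps. First, the cross-component step: you propose a greedy walk that descends into a new triconnected component whenever it meets a separating pair with an unexplored child, either crossing that component or terminating inside it. This is hard to make work: a simple path can visit a single triconnected component in several non-contiguous stretches (a piece may carry many separating pairs, and the path can leave through one and re-enter through another), so ``maximal within the component from the entry vertex'' does not certify that the endpoint's neighbours in that component are all used, and the bookkeeping of which components have been entered is not obviously replayable in logspace. The paper avoids this entirely: it roots the SPQR tree at the piece containing $r$, picks a single \emph{leaf} piece $L$ attached to its parent by a separating pair $\{r_0,r_1\}$, finds one path from $r$ to $r_0$ in the rest of the graph avoiding $r_1$, and then needs only a maximal path of $L$ starting at $r_0$ that does not end at $r_1$; since $L$ is a leaf, the endpoint's neighbours all lie in $L$, so the concatenation is maximal in $G$. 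The side condition ``does not end at $r_1$'', which your proposal never mentions, is exactly what makes the splice legitimate and is where most of the case analysis lives.

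Second, the triconnected case. ``Hug the outerplanar layers and drop inward when stuck'' does not by itself yield a maximal path: the graph left after deleting the outer cycle need not have a boundary that is a single cycle (it can be disconnected and have cut vertices), and, more importantly, when the walk halts you must certify that \emph{every} neighbour of the final vertex --- including its neighbours in deeper layers --- has already been visited. The paper's construction is much more specific: let $C$ be the outer cycle of $L-(r_0,r_1)$; walk along $C$ to the first attachment $u_{i_0}$ of a bridge $B$ of \emph{minimal span}, dive into $B$ along its \emph{second leftmost} path (the leftmost walk that refuses to enter the second attachment $u_{i_1}$), emerge at the third attachment $u_{i_2}$ (this uses $3$-connectivity and that every non-trivial bridge has at least three attachments), and wrap back along $C$ to $u_{i_0+1}$. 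Minimality of the span guarantees no other bridge is trapped in the enclosed region, and the key lemma is that all neighbours of $u_{i_1}$ inside $B$ already lie on the second-leftmost path, so the endpoint $u_{i_0+1}$ --- which can fail to be saturated only if it equals $u_{i_1}$ --- is indeed unextendable. Nothing in your layer-peeling sketch plays the role of this lemma, so the claim that your walk halts only at a saturated vertex is unsupported; the Logspace ``replayability'' you emphasize is the easy part, since the paper's path is a concatenation of a constant number of explicitly describable, individually Logspace-computable segments.
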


\section{Preliminaries}\label{sec:prelims}

We assume the reader is familiar with tree-decompositions, planarity, embeddings
on surfaces and graph minors.
When referring to concepts like treewidth, minors and genus of digraphs, we will always 
intend them to apply to the underlying undirected graph. 
In digraphs, we will use the term path to mean simple directed path unless otherwise
stated. We will use \emph{tail} of a path to refer to the starting vertex of the path 
and \emph{head} of a path to refer to the ending vertex of the path. 
The term \emph{disjoint paths} will always mean vertex disjoint paths, and the 
term \emph{internally disjoint paths} will refer to paths that are disjoint except 
possibly sharing end points. 
In a given graph $G=(V,E)$ and edge $(u,v) \in E(G)$, we denote the 
graph obtained by removing the edge $(u,v)$ (but keeping the vertices $u,v$ intact) 
by $G-(u,v)$.

We refer the reader to
~\cite{arora.barak} for definitions of classes 
$\Log, \NL$. 
The class $\AC^k$ is the class of problems for which there is a logspace-uniform
family of circuits $\{C_n\}$ consisting of AND, OR, and NOT gates, and each
circuit $C_n$ has depth \bigo{log$^kn$} and $n^{O(1)}$ gates.  
The class $\NC^k$ is defined similarly but with the additional restriction that
the gates in circuits have a fan-in of at most two.
A nondeterministic Turing machine is said to be
unambiguous if for every input, 
there is at most one accepting computation path. The class $\UL$ is the 
class of problems for which there exists an unambiguous Logspace Turing machine 
to decide the problem. In our setting, our outputs are not binary, but 
encodings of objects like a DFS tree, or a maximal path.  
Each of the classes mentioned above can be generalised to deal with 
\emph{functions} instead of just \emph{languages} in a natural way.
One may refer to~\cite{JK89} or~\cite[Section~2]{ACD22} for more details, specially what is 
meant by the computation of functions by a bounded space class like $\Log$,
$\NL$ or $\UL \cap \coUL$. 
In simplest terms, the output has a size of polynomially many bits
and each bit can be computed in the stated class.
We will often refer to a machine that computes a function as a transducer.
A constant number of $\Log$ transducers may be composed in $\Log$ (that is the
output of one, is the input of the next). A similar result holds for classes
like $\NL, \UL \cap \coUL$ (see \cite[Section~2]{ACD22} for a more elaborate explanation).

An important result in the Logspace world is that of Reingold~\cite{Reingold08}
which states that the problem of deciding reachability (as well as its 
search version) in undirected graphs is in \Log. 
Using this and the lemma of composition of
$\Log$ transducers, many routines like computing connected components left after
removing some vertices, or counting the number of vertices in a connected
component, can be shown to be in $\Log$. Another result we will use is that the
problem of reachability in directed planar graphs is in \ULx{}, as shown
by Bourke et al~\cite[Theorem~1.1]{BTV09}. This extends to graphs of bounded genus,
and in fact to graphs of \bigo{log$n$} genus if their polygonal schema is given
as part of input~\cite[Theorem~7]{GST19} (See also~\cite{KV10,TV12,DKTV12}). 
The theorem of~\cite{GST19} achieves this by giving a logspace algorithm to
compute \bigo{log $n$} size weight functions for edges that isolate shortest paths between
vertices of the input graph. In other words, for any pair of vertices $u,v$,
there is a unique path minimum weight path from $u$ to $v$ under the weight
function computed by~\cite[Theorem~7]{GST19}. Thierauf and Wagner~\cite{TW10}
showed that with
such path isolating weights assigned to edges, 
we can compute distances as well as shortest path between
vertices, in \ULx{}. These results will be useful in computing
arborescences\footnote{An arborescence of a digraph $G$ is a spanning tree
rooted at a vertex $r$, such that all edges of the tree are directed away from
$r$.} in bounded
genus graphs in~\cref{sec:sep_surface}.

Suppose $\mathcal{C}$ is a complexity
class (a set of boolean functions). We use $\AC^k(\mathcal{C})$ to denote
functions that can be computed by a family of $\AC^k$ circuits augmented with oracle
gates that can compute functions in $\mathcal{C}$. Note for example, that since 
$\ULx \subseteq \AC^1$, the class $\AC^1$(\ULx) is contained in $\AC^2$.




We will use the notion of Gaifman graph of a logical structure:
\begin{definition}
  Given a logical structure $\struct$, with a finite universe 
  $|\struct |$, 
and relation symbols which are interpreted as relations of $\struct$, 
the Gaifman graph of $\struct$, $\gaif{\struct}$ is a graph whose 
vertices are the elements of $|\struct |$, and for $u,v \in |\struct |$, we 
add an edge between them iff $u,v$ both occur in a relation tuple of $\struct$. 
\end{definition}
The treewidth of the structure is the treewidth of its Gaifman graph.
We will use the following extension of Courcelle's theorem by~\cite{EJT10}:
\begin{theorem}{\cite{EJT10}}\label{thm:EJT_C}
  For every $k \geq 1$ and every MSO-formula $\phi$, there is a logspace DTM
  that on input of any logical structure $\mathcal{A}$ of treewidth at most
  $k$, decides whether $\mathcal{A} \models \phi$ holds.
\end{theorem}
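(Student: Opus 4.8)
The plan is to follow the classical automata-theoretic proof of Courcelle's theorem, realizing each stage within logspace. The essential point is that $k$ and $\phi$ are fixed, so every formula and automaton constructed along the way has size depending only on $k$ and $\phi$ and is therefore of constant size relative to the input; the whole difficulty lies in manipulating the (large) tree decomposition of $\mathcal{A}$ within $O(\log n)$ space.

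First I would compute a tree decomposition of $\gaif{\mathcal{A}}$ of width bounded by some function $f(k)$. This is the logspace analogue of Bodlaender's theorem: for each fixed $k$ there is a $\Log$ transducer that, given a structure of treewidth at most $k$, outputs a tree decomposition of width $O(k)$. I expect this to be the main obstacle, because the textbook linear-time dynamic programming for treewidth is inherently sequential and uses linear space; one instead has to build the decomposition through a logspace-compatible recursion on balanced separators, keeping the recursion depth logarithmic. Having obtained a decomposition, I would rebalance it into a binary tree decomposition of depth $O(\log n)$ while increasing the width by only a constant factor, using the standard centroid/separator balancing; since this balancing recursion also has logarithmic depth, it is compatible with $\Log$.

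Next I would translate $\phi$, together with the width bound, into a finite deterministic bottom-up tree automaton $\mathcal{B}_\phi$ over the finite alphabet that encodes a single bag together with the relational tuples and adjacencies it realizes. By the B\"uchi-Elgot-Trakhtenbrot and Doner-Thatcher-Wright correspondence between MSO and tree automata, $\mathcal{B}_\phi$ can be built so that it accepts the encoding of a tree decomposition exactly when the underlying structure satisfies $\phi$, and it has only constantly many states. Finally I would evaluate $\mathcal{B}_\phi$ on the balanced decomposition by a recursive bottom-up pass: to determine the state at a node, recursively determine the states of its two children and apply the constant-size transition function. Because the tree has depth $O(\log n)$ and each recursion frame stores only a node identifier together with a constant-size automaton state, the recursion stack fits in $O(\log n)$ space, and $\mathcal{A} \models \phi$ holds iff the state computed at the root is accepting. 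Composing these logspace stages, using closure of $\Log$ under composition, yields the claimed logspace DTM.
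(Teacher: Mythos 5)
The paper does not actually prove this statement: it imports it wholesale from Elberfeld, Jakoby and Tantau \cite{EJT10} and uses it as a black box, so there is no internal proof to compare against. Judged on its own, your sketch follows the same high-level route as the cited work (a logspace analogue of Bodlaender's theorem, the MSO-to-tree-automaton translation, and evaluation of the automaton over the decomposition), and the automata-theoretic middle step is unproblematic since $k$ and $\phi$ are fixed. But there are two genuine gaps. The step you yourself flag as the main obstacle --- computing, in logspace, a tree decomposition of width $O(k)$ for a structure of treewidth at most $k$ --- is not an ingredient you may simply invoke; it \emph{is} the technical content of the theorem. Asserting that it can be done ``through a logspace-compatible recursion on balanced separators, keeping the recursion depth logarithmic'' begs the question: a logarithmic-depth recursion whose frames contain vertex names already costs $\Theta(\log^2 n)$ space unless one specifies exactly what is kept on the stack and how the current subproblem is re-derived from it. Making that recursion genuinely logspace-implementable is the entire difficulty of \cite{EJT10}, and your proposal offers no argument for it.

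The same quantitative slip recurs in your final evaluation phase. A recursion of depth $O(\log n)$ in which ``each frame stores a node identifier'' uses $O(\log n)$ frames of $O(\log n)$ bits each, i.e.\ $O(\log^2 n)$ space, which is not logspace. This part is fixable: on a balanced binary decomposition one can store a single direction bit plus one constant-size automaton state per level and recover the current node by replaying the root-to-node path from scratch, or one can invoke the known fact that membership in a fixed regular tree language is decidable in logspace on arbitrary (even unbalanced) input trees. But as written, the claimed $O(\log n)$ space bound does not follow from the argument you give, so both the first and the last stages of the pipeline need real proofs before the composition-of-transducers conclusion can be drawn.
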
 
\paragraph*{Clique-Sum decompositions}
We recall the notion of \emph{clique-sums}. Given two graphs $G_1,G_2$, a $k$-clique-sum 
of them can be obtained from the disjoint union of $G_1,G_2$ by identifying a clique 
in $G_1$ of at most $k$ vertices with a clique of the same number of vertices in $G_2$, 
and then possibly deleting some of the edges of the merged clique. 
It is used in reverse while decomposing a graph by clique sums. 
For our purpose, we will use up to $3$-clique sum decomposition of a
graph.
Suppose $G$ decomposes via a $3$-clique sum at clique $c$ into $G_1$ and $G_2$. Then we
write $G$ as $G_1{\oplus}_{c} G_2$.
More generally, if $G_1, G_2, \ldots, G_{\ell}$ all share a common clique $c$, then we
use $G_1\oplus_c G_2 \oplus_c \ldots \oplus_c G_{\ell}$
to mean $G_1, G_2, \ldots, G_{\ell}$ are glued together
at the shared clique.
If it is clear from the context which clique we are referring to, we will sometimes
drop the subscript and simply use $G_1\oplus G_2 \oplus \ldots \oplus G_{\ell}$ instead.
When separating the graph along a separating pair/triplet, we add virtual edges if 
needed to make the separating pair/triplet a clique. 
The decomposition can be thought of as a two colored tree
(see~\cite{CE13,EV21,CDGS24}) where the blue colored 
nodes represent $\emph{pieces}$ (subgraphs that the graph is decomposed into), 
and the red colored nodes represent the cliques at which two or more pieces 
may be attached. The edges of the clique describe the incidence relation between 
pieces and cliques (see~\cref{fig:clique_decomp}). 
For single-crossing-minor-free graphs, Robertson and Seymour gave the 
following theorem regarding their $3$-clique sum decomposition:
\begin{theorem}{\cite{RS93}}
 For any single-crossing graph $H$, there is an integer $\twh$ such that
 every graph with no minor isomorphic to $H$ can be obtained by
 $3$-clique sums, starting from planar graphs and graphs of
 treewidth at most $\twh$.
\end{theorem}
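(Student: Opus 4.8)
The plan is to reduce the global statement to a dichotomy for highly connected graphs and then glue the pieces together using small separators. Concretely, I would first isolate the following key lemma: there is a constant $w = w(H)$, depending only on $H$, such that every internally $4$-connected graph with no $H$-minor is either planar or has treewidth at most $w$. Granting this lemma, the theorem follows by decomposing an arbitrary $H$-minor-free graph $G$ along all of its separators of order at most $3$: turning each such separator into a clique by adding virtual edges expresses $G$ as a tree of $3$-clique-sums $G_1 \oplus_c G_2 \oplus_c \cdots$ of pieces that admit no separator of order $\le 3$, i.e.\ pieces that are internally $4$-connected (or trivially small, hence of bounded treewidth). Each such piece is again $H$-minor-free, so the key lemma classifies it as planar or of treewidth at most $w$, which is exactly the claimed structure with $\twh := w$.

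The heart of the argument, and the step I expect to be the main obstacle, is the key lemma, which I would prove by combining two classical tools. First, if an internally $4$-connected $G$ has treewidth exceeding a threshold $k(H)$, then by the Excluded Grid Theorem of Robertson and Seymour it contains a large grid as a minor, giving a planar mesh through which disjoint paths can be routed. Second, if $G$ is moreover non-planar, then by Wagner's characterisation of planarity~\cite{Wagner1937} it contains a $K_5$ or a $K_{3,3}$ minor, which supplies exactly one ``crossing gadget''. The claim is that a sufficiently large grid together with a single such Kuratowski subgraph, hosted in an internally $4$-connected graph, already realises $H$ as a minor. To see this one writes $H$ as a planar graph $H - f$ plus a single edge $f = cd$ that participates in the unique crossing of the drawing; since every planar graph is a minor of a large enough grid, one embeds the branch sets of $H - f$ inside the grid and then uses the connectivity of $G$ to route an internally disjoint path realising $f$ so that it crosses the grid precisely where the $K_5$ or $K_{3,3}$ provides the crossing. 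Making this routing rigorous -- keeping the branch sets of $H - f$ disjoint from the path for $f$, and verifying that a single available crossing suffices -- is the delicate part, and is exactly where $4$-connectivity (rather than mere $3$-connectivity) is needed to guarantee enough disjoint connections.

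Finally, I would discharge two bookkeeping points that make the assembly clean. The first is that decomposition along separators of order at most $3$ genuinely yields $3$-clique-sums in the sense of the preliminaries: each separator of size $\le 3$ is promoted to a clique via virtual edges, and gluing the resulting pieces back along these cliques and deleting the virtual edges absent from $G$ reconstructs $G$ exactly. The second is the minor-monotonicity used above, namely that each piece $G_i$ of the constructed decomposition (including its virtual clique edges) is a minor of $G$; this holds because we decompose along genuine separators of a connected graph, so the opposite side $G_{3-i}$ attaches to every vertex of the shared clique $c$ and connects them, whence contracting $G_{3-i} - c$ realises all edges of $c$ and leaves $G_i$ behind. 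Combining the dichotomy of the key lemma with these two observations produces the decomposition into planar pieces and pieces of treewidth at most $\twh$. As a sanity check, specialising $H = K_{3,3}$ should recover Wagner's decomposition with $K_5$ as the sole (bounded-treewidth) non-planar atom, and $H = K_5$ should recover the decomposition with the Wagner graph $V_8$ as the non-planar atom, both consistent with the bounded-treewidth atoms predicted by the statement.
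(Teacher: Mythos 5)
First, note that the paper does not prove this statement at all: it is imported verbatim from Robertson and Seymour~\cite{RS93} and used as a black box, so there is no in-paper proof to compare against. Judged on its own terms, your proposal does reconstruct the correct high-level architecture of the actual Robertson--Seymour argument (reduce to pieces with no small separators, then prove a ``planar or bounded treewidth'' dichotomy for those pieces), but two of the steps you treat as routine are exactly where the real difficulty lives, and one of them is argued incorrectly.

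The concrete error is your minor-monotonicity claim. Contracting the connected set $G_{3-i}-c$ onto a single vertex produces a new vertex adjacent to all three vertices of $c=\{a,b,c\}$, i.e.\ an apex over the separator, not the triangle on it; contracting that apex further into $a$ yields the edges $ab$ and $ac$ but never $bc$. The torso of a part over a $3$-separator is in general \emph{not} a minor of $G$ (take $G=K_{1,3}$ with the three leaves as the separator: the torso contains a triangle, but every minor of $K_{1,3}$ is a forest). This is precisely why the definition of clique-sum permits deleting edges of the merged clique, and why Robertson--Seymour must argue separately that either the torso is $H$-minor-free or an $H$-minor of the torso can be rerouted into an $H$-minor of $G$ --- an argument that genuinely uses the single-crossing structure of $H$ and fails for general $H$. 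Without it, your induction does not go through. The second gap is the key lemma itself: the coexistence of a large grid minor and a Kuratowski subgraph in an internally $4$-connected graph does not by itself let you realise $H$, because the crossing gadget need not interact with the grid at all (the grid may be ``flat'' and the non-planarity may live elsewhere); making them interact is the content of the roughly twenty-page main theorem of~\cite{RS93}, and the ``delicate routing'' you defer is that entire theorem rather than a finishing touch. (There is also a smaller conflation: pieces with no separator of order $\le 3$ are $4$-connected, which is strictly stronger than internally $4$-connected; degree-$3$ vertices cannot be decomposed away, so the reduction must be phrased for internally $4$-connected torsos from the start.)
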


An observation we use is that we can assume that in any planar piece 
of the decomposition, the vertices of a separating pair or triplet lie 
on a common face (Otherwise we could decompose the graph further. 
See~\cite{CE13,EV21} for more detailed explanation).
It is shown in~\cite{EV21} that it can be computed in \NC. 

For computing the $2$-clique sum decomposition, 
we will use the Logspace 
algorithm given by~\cite{DLNTW22}. 
The resulting pieces of the decomposition, also called as the
\emph{triconnected components} of the graph, are either dipoles,
simple cycles, or non-trivial $3$-connected graphs (i.e. $3$-connected graphs 
with at least four vertices). 

\begin{figure}
\begin{minipage}{0.4\textwidth}
  \includegraphics[scale=0.62]{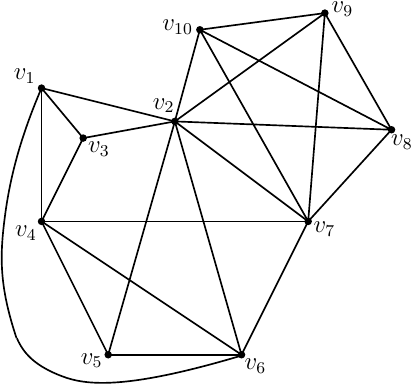}
  \caption{An example of a graph $G$. We ignore directions here.}
\end{minipage}
\begin{minipage}{0.6\textwidth}
  \includegraphics[scale=0.56]{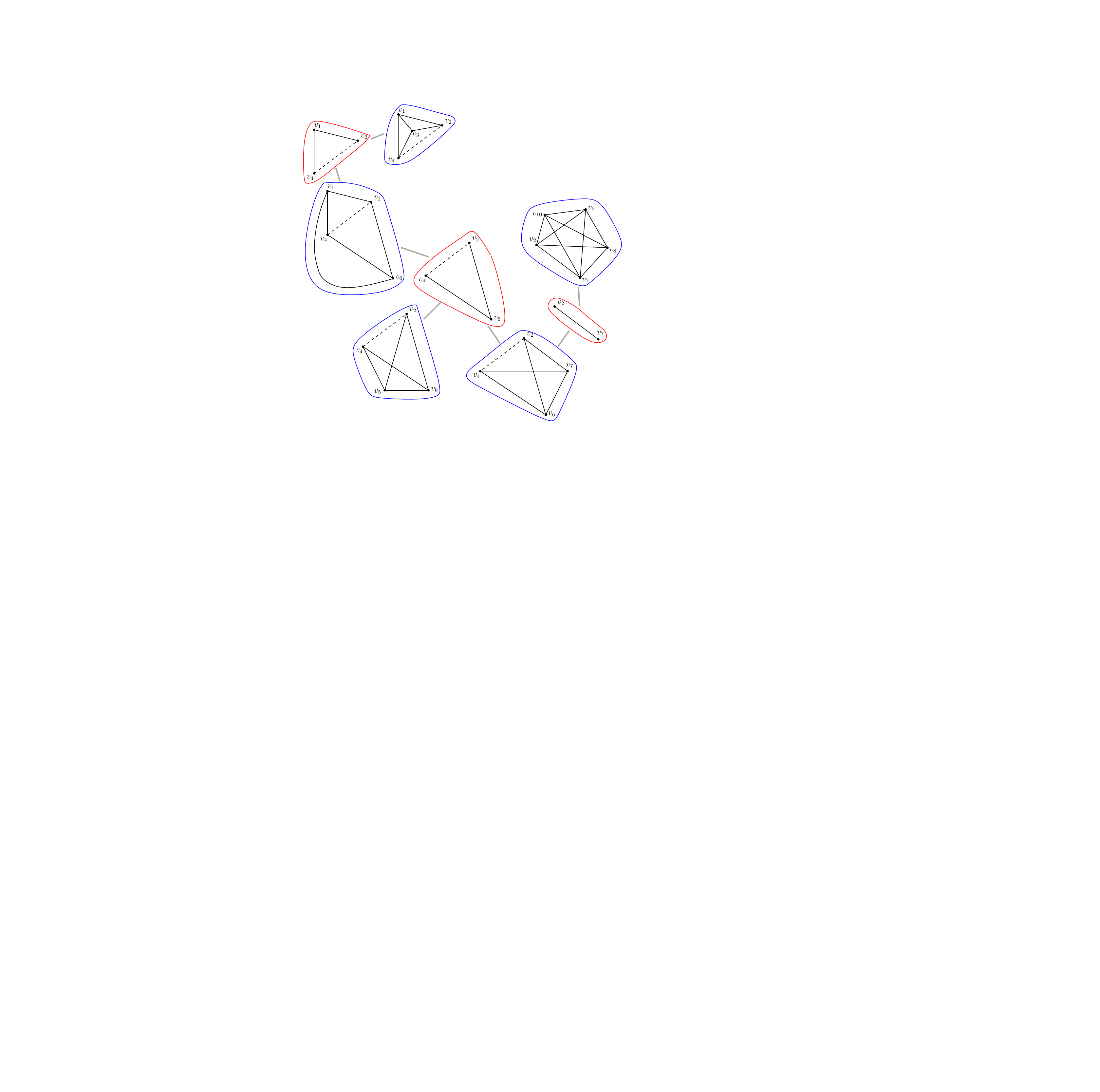}
  \caption{A clique sum decomposition of $G$. Red nodes
   are the clique nodes and blue node the piece nodes.
   Dashed edges denote virtual edges.}\label{fig:clique_decomp}
\end{minipage}
\end{figure}

\paragraph*{Balanced Path/Cycle separators}
Consider an input graph $G=(V,E)$ with $|V|=n$. 
If $G$ is undirected, we use the term $\alpha$-separator (or just 
balanced separator) in general to mean 
some subgraph of $G$ which if removed from $G$, then
the size of each of the resulting connected components is at most $\alpha n$, where 
$0 < \alpha < 1$. A single path $P$, or a collection of paths $\{P_1,\ldots P_k\}$
for example could form a balanced separator of $G$. We call the latter a
\emph{multi-path} separator consisting of $k$ paths. 
If $G$ is directed, we have a similar definition, but the constraint is instead on the size
of each \emph{strongly} connected component left after removing the vertices 
of the separator.
Note that if we have an algorithm to find an \sep{\alpha} in a directed graph, then
we can use that to find an \sep{\alpha} in an undirected graph by making each edge
into a bi-directed edge. 
Any simple cycle $C$ in an embedded planar graph $G$ will divide it into two regions, the
\emph{interior} and \emph{exterior} of $C$, which we denote by $\inte{C}$ and
$\exte{C}$ respectively.
Typically, if a face of $G$ is marked as the outer face then
the region containing the outer face is referred to as the exterior and the
other region as the interior.
Suppose the vertices, edges and faces of $G$ are assigned weights. Then we use
$\wt(\inte{C})$ to denote the sum of weights of vertices, edges, faces that are
contained in $\inte{C}$. The term $\wt(\exte{C})$ is defined similarly. The vertices,
edges that lie on $C$ itself are usually not
counted since if they are part of the separator and will be removed when
recursing. 
Suppose $u,v$ are two vertices in the cycle $C$ (embedded in the plane). 
We denote by $\segr{u}{v}$ (respectively $\segl{u}{v}$),
the segment of $C$ obtained by starting from $u$ and going clockwise (respectively
counter-clockwise) along $C$ till $v$.

%
%
%
%

\section{Path/Cycle Separators review}\label{sec:path_separators}

We restate some of the existing algorithms and analyze their complexity with respect
to circuits/space bounded complexity classes.

First, we reiterate an algorithm of Kao (see also~\cite{Kao88,AAK90,Kao95}) that combines a 
small number of path separators into a single path separator in parallel, and observe 
that it is in $\Log$ given an oracle for $\reach$. 
We state the theorems and corollaries for digraphs, but the corresponding
statements for undirected graphs also hold with appropriate changes.

\begin{restatable}{theorem}{thmKao}(\cite{Kao93,Kao95}) \label{thm:Kao}
Let $G$ be a digraph and $\{P_1,P_2\}$ be two paths that together form an
$\alpha$-separator of $G$ for some $\alpha \in [\frac{1}{2},1)$. Given an
oracle for $\reach$, we can in \Log, construct a single path $P_3$ that 
is also an $\alpha$-separator of $G$.
\end{restatable}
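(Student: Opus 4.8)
The plan is to combine the two paths $P_1, P_2$ into a single path $P_3$ by carefully stitching them together while controlling the sizes of the strongly connected components that survive removal of $P_3$. The key observation driving the construction is that $P_1 \cup P_2$ is already a separator: removing its vertices leaves strongly connected components each of size at most $\alpha n$. So if I can build a single path $P_3$ whose vertex set \emph{contains} all of $V(P_1) \cup V(P_2)$, then $P_3$ is automatically an $\alpha$-separator, since removing a superset of the separator vertices can only break the graph into smaller pieces. The catch, of course, is that a single simple path cannot in general visit every vertex of $V(P_1) \cup V(P_2)$ — that would be asking for a Hamiltonian path on those vertices. So the real content is to show that one can splice $P_1$ and $P_2$ into a single simple path $P_3$ that, while possibly dropping some vertices, still separates the graph with the same balance parameter $\alpha$.

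First I would look for a vertex or a short connection that lets me concatenate (a prefix/suffix of) $P_1$ with (a prefix/suffix of) $P_2$. The natural cases to analyze: if $P_1$ and $P_2$ share a vertex, or if there is an edge from the head/tail of one to the head/tail of the other, one can directly glue them at that junction, reversing orientations as needed (respecting directedness of the digraph). When no such direct junction exists, the idea is to cut one path at an appropriate point and route through a portion of the other, so that the union of the two \emph{retained} segments forms a single simple directed path. The crucial accounting step is to argue that whichever vertices get dropped in the splicing, they are dropped from at most one of the two paths in any given strongly connected component, so that each residual strong component still sees at least one of $P_1$ or $P_2$ removed in full where it mattered — keeping the component size bounded by $\alpha n$. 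Here the hypothesis $\alpha \geq \tfrac12$ is what gives the slack: since each component is already at most $\alpha n \geq n/2$, the combinatorics of which path to keep can be made to work without improving $\alpha$.

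Every step here is a reachability/connectivity query: testing whether two path endpoints are joined, locating a shared vertex, identifying the strongly connected components that remain after a candidate removal, and comparing their sizes to $\alpha n$. Each of these is either a direct $\reach$ oracle call, or a computation that reduces to a constant number of $\reach$ queries composed with logspace bookkeeping (counting vertices in a component, comparing against $\alpha n$). Since a constant number of $\Log$ transducers compose in $\Log$, and we are given $\reach$ as an oracle, the whole splicing procedure runs in $\Log$ with the oracle. I would present the construction as a bounded case analysis on the adjacency/overlap structure of $P_1$ and $P_2$, verifying in each case that (i) the resulting $P_3$ is a single simple directed path and (ii) it remains an $\alpha$-separator.

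\textbf{Main obstacle.} I expect the hard part to be the separator-preservation argument in the splicing step, rather than the complexity bound. Dropping vertices from $P_1 \cup P_2$ when forming a single simple path risks reconnecting two strongly connected components that the full separator had kept apart, potentially merging them into a component larger than $\alpha n$. The delicate point is to choose the cut points so that the retained path still intersects every ``dangerous'' strongly connected structure. Getting the case analysis exhaustive — especially handling the interaction of directed reachability with path orientation when reversing segments — is where the real care is needed, and I would want to lean heavily on the $\alpha \geq \tfrac12$ slack to keep the balance guarantee intact.
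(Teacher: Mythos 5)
There is a genuine gap: you have correctly identified the hard part (how to connect the two retained segments into one simple directed path while preserving the separator property) but you never supply the mechanism that resolves it, and the ``bounded case analysis on adjacency/overlap'' you propose is not the right tool. The paper's argument is a single uniform construction with no case analysis: greedily trim $P_1$ from its head end to a prefix $P'_1 = \{u_1,\dots,u_s\}$ so that $P'_1\cup P_2$ is still an $\alpha$-separator but deleting $u_s$ as well would break it; this certifies that $u_s$ lies in a strongly connected component of size at least $\alpha n$ in $G-(P'_1\setminus\{u_s\}\cup P_2)$. Then trim $P_2$ from its tail end to a suffix $P'_2=\{v_t,\dots,v_q\}$ with the symmetric property. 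Since $\alpha\ge 1/2$, the two certified components containing $u_s$ and $v_t$ must be the same strongly connected component, so there is a directed path $P$ from $u_s$ to $v_t$ in $G-(P'_1\setminus\{u_s\}\cup P'_2\setminus\{v_t\})$; this path is internally disjoint from $P'_1\cup P'_2$, so $P'_1.P.P'_2$ is a single simple directed path, and it is an $\alpha$-separator because its vertex set contains $P'_1\cup P'_2$, which is one by construction. This is exactly the step your proposal leaves open: the existence of the connecting path is not an ``adjacency'' fact about $P_1$ and $P_2$ at all (the connector may be long and use vertices outside both paths); it is forced by the maximality of the trimming together with $\alpha\ge\tfrac12$.

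Two further points. First, your suggestion of ``reversing orientations as needed'' does not make sense in a digraph — a directed path cannot be reversed — and the paper's construction never needs it: $P'_1$ is traversed in its original direction, then $P$, then $P'_2$. Second, your opening observation (a superset of a separator is a separator) is correct and is implicitly what makes the final path work, but on its own it does not get you anywhere, since as you note a single simple path cannot contain all of $V(P_1)\cup V(P_2)$ in general; the trimming is what makes the retained union small enough to thread onto one path while keeping the separator guarantee.
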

We reproduce the proof for completeness in~\cref{app:thm_Kao_proof}. 
We demonstrate it for the case when $\alpha = \frac{1}{2}$, as
the argument is identical for higher values of $\alpha$.
Note that as a corollary, the above algorithm is in 
$\NL$ for general directed graphs, and
in $\ULx$ for planar directed graphs (using~\cite{BTV09}).
For undirected graphs, the algorithm (with appropriate changes) is in $\L$
using Reingold's algorithm~\cite{Reingold08}.
As noted in \cite{AAK90,KK93}, if we have a multipath separator 
consisting of $k$ many paths, say $\{P_1,P_2\ldots P_k\}$, 
we can combine them into a single path separator using $k$ iterations of 
the above procedure with $k$ transducers. 
Thus we have the following corollary:
\begin{corollary}\label{corr:Kao_ext}
Let $G$ be a digraph and $k$ be a constant. Let $\{P_1,P_2, \ldots P_k\}$ be $k$ disjoint paths 
that form a multipath $\alpha$-separator of $G$ for some $\alpha \in [\frac{1}{2},1)$. 
Given an oracle for $\reach$, we can in \Log, construct a single path $P_{k+1}$ that 
is also an $\alpha$-separator of $G$.
\end{corollary}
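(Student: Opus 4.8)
The plan is to apply Theorem~\ref{thm:Kao} iteratively, reducing the number of paths in the multi-path separator by one in each round, until only a single path remains. The key observation is that Theorem~\ref{thm:Kao} takes two paths forming an $\alpha$-separator and merges them into one path that is \emph{still} an $\alpha$-separator of the \emph{same} graph $G$. This is exactly the invariant we need to chain the procedure: the separation quality $\alpha$ is preserved, and the single output path can be paired up with another input path for the next round.

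So the proof would go as follows. Starting from the $k$ disjoint paths $\{P_1,\ldots,P_k\}$ forming an $\alpha$-separator of $G$, first apply Theorem~\ref{thm:Kao} to the pair $\{P_1,P_2\}$ to obtain a single path $Q_2$ that is an $\alpha$-separator of $G$. Crucially, $\{Q_2,P_3,\ldots,P_k\}$ is now a multi-path $\alpha$-separator of $G$ consisting of $k-1$ paths (removing a superset of the separator vertices can only shrink the components further, and $Q_2$ together with the remaining $P_i$ separates at least as well as the original collection did). Then apply Theorem~\ref{thm:Kao} again to $\{Q_2,P_3\}$ to get $Q_3$, and so on, so that after $k-1$ applications we obtain a single path $P_{k+1}:=Q_k$ that is an $\alpha$-separator of $G$. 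Since each application of Theorem~\ref{thm:Kao} runs in $\Log$ given a \reach{} oracle, and we have exactly $k-1$ rounds, this is a composition of $k-1$ logspace transducers (each with oracle access to \reach).

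The complexity bound then follows from the composition lemma for logspace transducers stated earlier in the excerpt: a constant number of $\Log$ transducers compose to a single $\Log$ transducer, and since $k$ is a constant, $k-1$ is also a constant, so the whole pipeline stays in $\Log$. The \reach{} oracle is available throughout, so each transducer in the chain may freely make oracle calls. One subtle point to verify carefully is that the intermediate path $Q_i$ produced in round $i$ is genuinely disjoint from the remaining untouched paths $P_{i+2},\ldots,P_k$ (or at least that Theorem~\ref{thm:Kao} does not require strict disjointness of its two input paths); the statement of Theorem~\ref{thm:Kao} as given does not demand the two input paths be disjoint, so pairing $Q_i$ with the next $P_{i+1}$ is legitimate even if they share vertices.

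\textbf{The main obstacle} I anticipate is making the separator-preservation invariant fully rigorous at each intermediate step: one must argue that replacing two paths by their merged single path never degrades the separation bound, i.e., that $Q_i$ together with the leftover paths still forms an $\alpha$-separator. This is intuitively clear because Theorem~\ref{thm:Kao} guarantees $Q_i$ alone already $\alpha$-separates $G$ (using only the vertices that lay on $P_1,\ldots,P_{i+1}$), so adding back the further paths $P_{i+2},\ldots,P_k$ to the removal set can only help; but one should state this monotonicity of separation under enlarging the removed vertex set explicitly, both in the directed (strongly-connected-component) and undirected settings. Everything else is a routine iteration, so beyond this invariant the argument is a direct unrolling of the two-path merge.
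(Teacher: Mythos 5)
There is a genuine gap in the first step of your iteration. \cref{thm:Kao} requires as a hypothesis that its \emph{two} input paths \emph{together} form an $\alpha$-separator of $G$; in the corollary's setting only the full collection $\{P_1,\ldots,P_k\}$ is assumed to be an $\alpha$-separator, and the pair $\{P_1,P_2\}$ by itself will in general not be one. So your first application of the theorem is not licensed. Relatedly, if its conclusion did hold you would be finished after one round ($Q_2$ alone would already be a single-path $\alpha$-separator), which signals that the invariant you are carrying is the wrong one. Your justification that $\{Q_2,P_3,\ldots,P_k\}$ "separates at least as well as the original collection" also does not follow from monotonicity: $Q_2$ is obtained by \emph{trimming} $P_1$ and $P_2$ and adding a connector, so the new vertex set is not a superset of the old one.

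What the paper intends by "$k$ iterations of the above procedure" is to rerun the trimming argument of \cref{thm:Kao} \emph{relative to the whole current collection}: trim $P_1$ to $P_1'$ so that $P_1'\cup P_2\cup\cdots\cup P_k$ is still an $\alpha$-separator but would cease to be upon removing one more vertex of $P_1'$; trim $P_2$ symmetrically; then the two exposed endpoints each lie in a strongly connected component of size at least $\alpha n\ge n/2$ of $G$ minus all remaining separator vertices, hence in the \emph{same} component, and the connecting path lives inside that component and therefore automatically avoids $P_3,\ldots,P_k$ (this is also where disjointness of the merged path from the untouched paths comes from — not from any slack in the statement of \cref{thm:Kao}). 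This produces a $(k-1)$-path $\alpha$-separator, and the iteration then proceeds as you describe. Your complexity accounting — $k-1$ rounds, each a $\Log$ transducer with a $\reach$ oracle, composed a constant number of times — is correct and matches the paper.
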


  
For bounded treewidth graphs, we can find a tree decomposition in $\Log$
using~\cite[Theorem~1.1]{EJT10}, and we know that there is always a bag in the
tree decomposition of a graph whose vertices together form a $1/2$-separator of
the graph.
Since $\reach$ is known to be in $\Log$ for digraphs of bounded treewidth 
(from~\cref{thm:EJT_C}), we have the following result using the
corollary above: \begin{corollary}
  Given a digraph of bounded treewidth, we can find a 
  $1/2$-path separator in $\Log$.
\end{corollary}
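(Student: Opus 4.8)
The plan is to assemble the statement from the three ingredients spelled out just before it: the logspace computation of a tree decomposition, the fact that some bag is a $1/2$-separator, and the path-merging procedure of \cref{corr:Kao_ext}. First I would compute a width-$k$ tree decomposition of the underlying undirected graph of the input digraph $G$ in $\Log$ using \cref{thm:EJT_C} (equivalently, \cite[Theorem~1.1]{EJT10}), where $k=O(1)$ is the bound on the treewidth. Every bag then contains at most $k+1$ vertices. Next I would locate a bag $B$ whose vertex set is a $1/2$-separator; such a bag always exists, and I would find it by iterating over the polynomially many bags and, for each candidate, testing in $\Log$ whether deleting its vertices leaves every connected component of the underlying undirected graph of size at most $n/2$. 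Computing connected-component sizes after a vertex deletion is in $\Log$ by Reingold's theorem, and looping over all bags while reusing the same logspace workspace keeps the search within $\Log$.

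The key observation that makes this undirected test suffice for a \emph{directed} separator is that every strongly connected component of $G-B$ is contained in a single weakly connected component of $G-B$. Hence if all weak components have size at most $n/2$, so do all strongly connected components, and $B$ is a $1/2$-separator in the directed sense required by our definition of balanced separator.

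Finally, I would regard each of the at most $k+1$ (hence constantly many) vertices of $B$ as a trivial path consisting of a single vertex. These are vertex-disjoint paths that together constitute a multipath $1/2$-separator, so \cref{corr:Kao_ext} applies with $\alpha=1/2$: using an oracle for $\reach$, which is available in $\Log$ for bounded-treewidth digraphs by \cref{thm:EJT_C}, it merges them into a single path that is again a $1/2$-separator. Composing the three logspace transducers — tree decomposition, separator-bag search, and path merging — yields the claimed $\Log$ bound.

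I expect no serious obstacle, since the statement is essentially a corollary built from already-established pieces. The only points requiring care are verifying that an undirected $1/2$-separating bag also separates the strongly connected components (handled by the containment observation above), and checking that the iteration over bags together with the nested $\Log$ reachability calls genuinely composes within logarithmic space rather than blowing up the space bound.
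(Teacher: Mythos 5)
Your proposal is correct and follows essentially the same route as the paper: compute a tree decomposition in $\Log$ via \cite{EJT10}, locate a bag whose constantly many vertices form a $1/2$-separator, and merge these trivial single-vertex paths into one path via \cref{corr:Kao_ext} using the $\Log$ reachability oracle for bounded-treewidth digraphs. The paper leaves these steps implicit in a single sentence, and your added care about directed versus undirected components and about composing the logspace transducers is consistent with what the paper intends.
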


\subsection{Application of Path Separators in Depth First Search}
It is shown in~\cite[Section~7]{ACD22} how to construct a DFS tree in a digraph given
a routine for finding path separators, in parallel. The algorithm follows
a divide and conquer strategy, where we find DFS trees on the smaller
strongly connected components in parallel, and sew them together using a 
routine for DFS in DAGs. 
We state the result in the following proposition.
\begin{proposition}\label{prop:sep_to_DFS} 
  Let $G$ be a digraph and $r \in V(G)$. 
  Let $\fnsep(H)$ be a function that takes as input any digraph $H$  
  and returns a balanced path separator of $H$. 
  Then there exists a uniform family of $\AC^1$ circuits with oracle gates for
  $\fnsep$ and functions computable in \ULx, that take as input $(G,r)$, 
  and return a DFS tree of $G$ rooted at $r$.
\end{proposition}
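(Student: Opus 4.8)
The plan is to implement the standard divide-and-conquer strategy for parallel DFS, using the path-separator oracle $\fnsep$ to drive the recursion and keeping the recursion depth at $O(\log n)$ so that the whole computation unrolls into an $\AC^1$ circuit. First I would call $\fnsep(G)$ to obtain a single balanced path separator $P = (r = v_0, v_1, \ldots, v_\ell)$; to make $P$ the spine of the DFS tree I want it to start at the designated root $r$, which I can arrange by first finding (via the \ULx{} reachability routines cited in the preliminaries) a path from $r$ to the tail of the separator and prepending it, or by invoking the separator routine on an appropriately rooted instance. Removing the vertices of $P$ leaves strongly connected components each of size at most $\alpha n$; these can be computed in \ULx{} (reachability/SCC in the relevant graph classes), and the key point is that each recursive DFS on a component shrinks the instance by a constant factor, so after $O(\log n)$ levels the components are singletons.

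The heart of the argument is the sewing step, which handles the DAG structure \emph{across} the path and between the path and the recursively-computed subtrees. Having computed a DFS tree $T_i$ for each component $C_i$, I would attach each $T_i$ to the path $P$ by hanging its root at the \emph{highest} (closest to $r$) vertex $v_j$ on $P$ from which $C_i$ is reachable through vertices already consumed --- equivalently, contracting each $C_i$ to a single super-vertex yields a DAG on $P$ together with the super-vertices, and a correct DFS tree of this DAG tells us exactly which $v_j$ each subtree should hang from. Crucially, for a DFS tree of $G$ to be valid we must respect the ancestor property for \emph{all} cross edges, not just tree edges: every non-tree edge must go between an ancestor and a descendant. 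This is where the path-separator structure does the work --- because $P$ is a single simple path from the root, the components that split off it can be linearly ordered by their attachment points, and the DAG-DFS sewing routine computes a consistent left-to-right ordering of the subtrees so that no forward-cross edge violates the ancestor relation.

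I would implement the sewing via a DFS-in-DAGs subroutine (computable in \ULx{} using the reachability/distance machinery of Bourke–Taiwari–Venkateswaran and Thierauf–Wagner, or in \Log{} for the undirected case via Reingold) applied to the condensation graph; the \ULx{} oracle gates in the proposition statement are precisely for these reachability, SCC, and DAG-DFS steps, while $\fnsep$ supplies the separator at each node of the recursion tree. The composition is then standard: one level of the recursion --- call $\fnsep$, compute components, solve the DAG-DFS, splice the results --- is an $\AC^0$ circuit with oracle gates for $\fnsep$ and for functions in \ULx{}, and stacking $O(\log n)$ such levels gives an $\AC^1$ circuit with the stated oracle gates. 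Uniformity follows because each level is identical and the recursion tree is balanced.

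The main obstacle I expect is correctness of the sewing rather than the complexity bookkeeping: I must verify that attaching each recursively-computed subtree $T_i$ at the correct vertex of $P$, in the order dictated by the DAG-DFS on the condensation, produces a genuine DFS tree --- i.e.\ that \emph{every} edge of $G$ (including edges internal to a component that leave and re-enter it, and edges between distinct components) satisfies the ancestor–descendant property. This requires a careful case analysis showing that edges from a component to the path, or between two components, always land on an ancestor in the assembled tree, which is exactly the content that the DAG-DFS ordering guarantees; I would lean on the detailed treatment in~\cite[Section~7]{ACD22} for this part and restrict the present argument to certifying that each constituent routine lives in the claimed class and that the recursion depth is logarithmic.
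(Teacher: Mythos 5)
Your proposal is correct and takes essentially the same route as the paper: the paper gives no detailed proof of this proposition, instead citing~\cite[Section~7]{ACD22} and sketching exactly the divide-and-conquer scheme you describe (separator from $\fnsep$, recurse on the smaller strongly connected components in parallel, sew via DFS in DAGs, with $O(\log n)$ recursion depth yielding $\AC^1$ with the stated oracle gates). Your deferral of the sewing-correctness case analysis to~\cite{ACD22} matches the paper's own treatment.
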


Although the proposition is for
digraphs, the same idea works in the simpler setting of undirected graphs also
(Instead of DAGs of strongly connected components, we just have to recurse on
connected components there. See for example~\cite{Smith86}). 

\subsection{Cycle Separators in undirected planar graphs }
Consider the following, more general notion of cycle separators in 
planar graphs, that was introduced in~\cite{Miller86}:
\begin{definition}
 Let $G$ be an undirected planar graph, and $\wt\colon \{V(G)\cup E(G) \cup
 F(G)\}\to\Q^+$ be a normalised weight function assigning weights to
 vertices, edges and faces of $G$ that sum up to $1$. 
 We call a cycle $C$ a $2/3$- 
 $\emph{interior-exterior}$ cycle separator (or just a 2/3-cycle separator
 for brevity) if $\wt(\inte{C}) \leq 2/3$, and $\wt(\exte{C}) \leq 2/3$.
\end{definition}
Miller showed in~\cite{Miller86} that such a weighted cycle separator always exists
if $G$ is biconnected and if no face is too heavy. 
Moreover, we can find such 
a separator in parallel (see also~\cite{Shannon88}).

\begin{theorem}{\cite{Miller86}}\label{thm:IE_cyc_sep}
  Let $G$ be a biconnected planar graph, and $\wt\colon \{V(G)\cup E(G) \cup
 F(G)\}\to\Q^+$ be an assignment
  of non-negative weights to vertices, edges, faces that sum up to $1$, such that no face
  has weight more than $2/3$. Then we can construct a $2/3$-\ie-cycle separator of
  $G$ in \CRCW.
\end{theorem}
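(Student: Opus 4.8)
The plan is to reconstruct Miller's existence proof through the \emph{tree--cotree} structure of an embedded planar graph and then check that every step is a standard parallel routine, placing the whole construction in \CRCW. First I would reduce to a triangulation. Since $G$ is biconnected I can add zero-weight virtual edges so that every face (including the outer face) becomes a triangle, keeping $G$ biconnected and plane; a separating cycle of the triangulation maps back to one of $G$ once the virtual edges on it are suppressed, and the weight constraints are unchanged. I then fix a spanning tree $T$ of $G$ and invoke the classical fact that the edges not in $T$ are exactly the duals of a spanning tree $T^{*}$ of the dual graph $G^{*}$. Rooting $T^{*}$ at the triangle carrying the outer face, every non-tree edge $e$ is an edge of $T^{*}$, and its fundamental cycle $C_{e}$ (the two $T$-paths from the ends of $e$ to their lowest common ancestor, closed by $e$) is a \emph{simple} cycle whose interior is precisely the disk formed by the faces in the subtree of $T^{*}$ hanging below $e$. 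More generally, each connected subtree of $T^{*}$ corresponds to a simply-connected union of faces bounded by a simple cycle, so that the interior weight of its bounding cycle equals the total weight carried by the subtree.

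Reducing to $T^{*}$, I would then look for a balanced cut. If some face $f$ already carries weight at least $1/3$ then, as $f$ is a triangle with empty interior, the triangle $C$ bounding $f$ is a $2/3$-cycle separator, since $\wt(\inte{C})=\wt(f)\le 2/3$ and $\wt(\exte{C})\le 2/3$, and I output it. Otherwise every face---every node of $T^{*}$---has weight below $1/3$, and I run a centroid-style descent: starting at the root I repeatedly move into the child whose subtree is heaviest as long as that subtree still exceeds $2/3$. Because each triangle has only three incident edges, every node has at most three incident $T^{*}$-edges, hence at most two children away from the root, and the descent halts at a node $u$ whose subtree exceeds $2/3$ while each of its children's subtrees is at most $2/3$. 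A short case analysis now finishes: if some child subtree lies in $[1/3,2/3]$ I cut its fundamental cycle; otherwise all the light child subtrees together with $u$'s own (light) face still exceed $2/3$, which forces the heaviest child subtree enlarged by $u$'s own triangle to weigh strictly between $1/3$ and $2/3$. That enlarged region is again a connected subtree of $T^{*}$ (namely $u$ together with the heaviest child branch), so it is simply connected and its bounding simple cycle is the desired separator.

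The hard part, and the reason biconnectivity is assumed, is the structural lemma underlying all of this: that a connected subtree of the cotree corresponds to a simply-connected region whose boundary is a \emph{single simple} cycle, and that enlarging the subtree by one adjacent triangle preserves this invariant. In a biconnected (indeed fully triangulated) plane graph every face is bounded by a simple cycle and no ``pinching'' can disconnect a subtree-region's boundary, which is exactly what makes the fundamental cycles simple and the splice in the previous paragraph well defined; formalising this invariant is the technical heart. A secondary subtlety is the bookkeeping of the vertex and edge weights, since the vertices and edges lying \emph{on} the chosen cycle are removed and must be counted on neither side. I would handle this by charging each vertex/edge weight consistently to the faces incident to it, so that the weight of a subtree-region is well defined and the monotonicity used in the descent is preserved; the hypothesis that no face is heavier than $2/3$ is what guarantees that a balanced cut survives this accounting.

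Finally I would check the complexity. Triangulating, building a spanning tree, and extracting the cotree $T^{*}$ are all in \NC\ (the undirected reachability involved is in \Log). Computing, for every edge of $T^{*}$, the weight of the subtree below it is a rooting plus Euler-tour and subtree-sum computation, all standard parallel tree routines; locating a heavy face, or the stopping node $u$ and the absorbing threshold, is a further prefix-type search over $T^{*}$; and recovering the explicit cycle from the chosen edge or region is again a reachability computation. Each ingredient runs in a constant number of \CRCW\ rounds and these compose, so the entire construction is in \CRCW, as claimed.
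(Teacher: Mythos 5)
This statement is quoted from Miller's paper and is not proved in the present paper at all --- the authors simply cite \cite{Miller86} (and \cite{Shannon88} for the parallel implementation). So the comparison here is between your reconstruction and Miller's actual argument, and your reconstruction has a genuine gap at exactly the point you yourself flag as ``the technical heart''.

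The false step is the structural claim that every connected subtree of the cotree $T^{*}$ corresponds to a region bounded by a \emph{single simple} cycle, and that this is preserved when you absorb one more triangle. For a subtree of the form ``everything below a cotree edge $e^{*}$'' this is true: the boundary is the fundamental cycle $C_{e}$ of the primal non-tree edge $e$, which is simple because it is the unique cycle in $T+e$. But the region you need in your final case --- the heaviest child branch of $u$ enlarged by $u$'s own triangle --- is not of this form. That region is $\inte{C_{e}}$ together with the triangle $f_{u}$ on the far side of $e$; its boundary is $C_{e}$ with the edge $e$ replaced by the two other edges of $f_{u}$, and if the apex of $f_{u}$ already lies on $C_{e}$ this boundary is a closed walk pinched at that vertex, i.e.\ a theta-shaped figure rather than a simple cycle. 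Biconnectivity and triangulation do not rule this out; they only guarantee that individual faces are simple cycles. Splitting the pinched boundary into its two constituent simple cycles does not obviously recover balance, since the weight can distribute badly between the two lobes. Maintaining simplicity while rebalancing is precisely why Miller's proof is not a centroid descent on the cotree but an iterative cycle-modification argument (starting from a face or a BFS level and repeatedly expanding/shrinking the cycle by local face moves that provably preserve simplicity), and why the parallel version needs a further idea (a monotone sequence of cycles over which one can binary-search). Your complexity accounting (spanning tree, Euler tour, subtree sums in \CRCW) is fine for the routines you name, and your device of charging vertex and edge weights to incident faces is harmless since it only over-counts each side, but the existence argument underneath it does not go through as written.
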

The theorem in~\cite{Miller86} also gives a bound on size of the separator but
that is not important for us.

\section{DFS in directed graphs of bounded genus.}
\label{sec:sep_surface}

We show how to find balanced path separators (and therefore a \DFS{} tree) 
in digraphs that can be embedded on a surface of genus at most a fixed constant 
$g$, in \NC{}. 
The first step naturally is to find an embedding, which can be done
using the result of Elberfeld and Kawarabayashi~\cite[Theorem~1]{EK14}, 
which states that we can find an embedding of a graph of genus at most a
constant $g$, on a surface of genus $g$, in Logspace.

The idea is to find a small number of disjoint, 
surface non-separating cycles (not necessarily directed cycles), say
$C_1, C_2, \ldots C_l$ ($l \leq g$) in $G$,
such that the (weakly) connected components left after removing them are all 
planar. These can be found with at most $g$ sequential iterations. 
Then, if any of the remaining planar components has a large strongly connected
component, we can find a $2/3$-
path separator, say $P$, in it using
~\cite[Theorem~19]{KK93}. 
A more modern analysis of the algorithm in $\cite{KK93}$ gives a $\ULx$ bound
(see~\cite[Section~3.3]{Chauhan24} for a proof). 
Then $C_1, C_2 \ldots C_l, P$ together form a balanced multipath separator 
of the digraph $G$.
The (possibly undirected) cycles $C_1, C_2 \ldots C_l$ will have an additional property. 
Any $C_i \in \{C_1, C_2 \ldots C_l\}$ will either be a 
directed cycle, or it will consist of two internally disjoint (directed) 
paths that share a common tail vertex and a common head vertex.
Therefore the multipath separator can be written as consisting of 
at most $2g+1$ many paths, and we can 
use~\cref{corr:Kao_ext} to get the required path separator.

We use the following known fact (see~\cite[Lemma~4.2.4]{MT01}
or~\cite[Fact~9.1.6]{Grohe17}).
\begin{theorem}{\cite{MT01}}
  Let $G$ be a graph embeddable on a surface $\sur$, of genus $g$. 
  Let $C$ be a cycle that forms a surface non-separating curve on some
  embedding of $G$ on $\sur$. Then every connected component of $G-C$
  is embeddable on a surface of genus $g-1$. 
\end{theorem}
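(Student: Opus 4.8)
The plan is to prove this by a standard surface-surgery argument, reducing the combinatorial claim to a computation of Euler characteristic. It suffices to show that cutting $\sur$ along the curve $C$ and capping off the resulting boundary yields a closed surface of genus $g-1$ that still carries the embedding of $G-C$: since each connected component of $G-C$ is a subgraph of $G-C$, it then embeds on that same surface. So the real content is a statement about the surface, and the graph merely rides along on it.

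First I would fix the embedding of $G$ on $\sur$ witnessing $C$ as a non-separating simple closed curve, and take a regular neighbourhood $N$ of $C$ in $\sur$. Depending on whether $C$ is two-sided or one-sided, $N$ is an annulus or a M\"obius band; in the orientable case only the two-sided possibility occurs, and I would treat that as the main case. Cutting $\sur$ along $C$ produces a compact surface $\Sigma^{\ast}$. Because $C$ is \emph{non-separating}, $\Sigma^{\ast}$ is connected, and it has two boundary circles in the two-sided case (one in the one-sided case). Capping each boundary circle with a disk gives a closed surface $\Sigma'$, and the embedding of $G-C$, which is disjoint from $C$ and hence lies in $\sur\setminus C = \Sigma^{\ast}\setminus\partial\Sigma^{\ast}$, carries over unchanged onto $\Sigma'$.

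Next I would track the Euler characteristic. Using a CW structure in which $C$ is a subcomplex, cutting along a simple closed curve leaves $\chi$ unchanged (the loop is merely doubled, adding equally many vertices and edges), while capping a boundary circle with a disk raises $\chi$ by $1$. Thus in the two-sided case
\[
  \chi(\Sigma') = \chi(\sur) + 2 = (2-2g)+2 = 2-2(g-1),
\]
and since cutting and capping preserve orientability, $\Sigma'$ is the orientable surface of genus $g-1$. In the one-sided case only one disk is added, so $\chi$ rises by $1$ and the (Euler) genus still strictly drops. In either case $G-C$, and therefore each of its connected components, embeds on a surface of genus at most $g-1$.

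The main obstacle is the topological bookkeeping rather than anything combinatorial. The crucial use of the hypothesis is that a \emph{non-separating} curve yields a \emph{connected} cut surface, so that a handle is genuinely destroyed rather than merely redistributed between two pieces (which is exactly what would happen for a separating curve, where the genus would split across the two sides instead of dropping). I must also handle the orientable/non-orientable and two-sided/one-sided cases uniformly, which is cleanest when the decrease is phrased in terms of Euler characteristic (equivalently Euler genus), so that a single $\chi$-increment argument covers every case. A minor point to verify is the precise reading of $G-C$ (deletion of the vertices of $C$), ensuring the residual graph is genuinely disjoint from the curve so that its embedding survives the surgery intact.
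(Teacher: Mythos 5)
Your surgery argument is correct: cut along $C$ (connected result precisely because $C$ is non-separating), cap the one or two boundary circles, track $\chi$, and observe that $G-C$ lives in the interior of the cut surface, so each of its components rides along onto the lower-genus surface. The paper itself does not prove this statement but imports it from \cite{MT01}, and your proof is essentially the standard one given there, including the correct handling of the two-sided/one-sided dichotomy via Euler characteristic; the only point worth making explicit is the (routine) monotonicity fact that embeddability in a surface of smaller (Euler) genus implies embeddability in one of genus exactly $g-1$.
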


Therefore, if we have an algorithm to find such a (undirected) cycle in $\NC$, then
we can repeat it $g$ times to find the required (undirected) cycles 
$C_1,C_2 \ldots C_k$. 

We will use the following lemma from \cite{ADR05} (See also \cite{Thomassen90}) 
to find such cycles.

\begin{lemma}{\cite{ADR05}}\label{lem:fund_cyc}
  Let $G$ be a graph of genus $g>0$, and let $T$ be a spanning tree of $G$. Then
  there is an edge $e \in E(G)$ such that $T\cup {e}$ contains a surface
  non-separating cycle. 
\end{lemma}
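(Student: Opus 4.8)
The plan is to prove Lemma~\ref{lem:fund_cyc} by exploiting the homological/counting gap between a spanning tree and a graph of positive genus. Let $T$ be a spanning tree of $G$ on $n$ vertices, so $|E(T)| = n-1$. The non-tree edges are the \emph{chords}; each chord $e \notin E(T)$ creates a unique cycle $C_e$ in $T \cup \{e\}$, the so-called \emph{fundamental cycle} of $e$ with respect to $T$. These fundamental cycles $\{C_e : e \notin E(T)\}$ form a basis of the cycle space of $G$ over $\mathbb{F}_2$. The goal is to show that at least one such $C_e$ is surface non-separating on the fixed embedding of $G$ in a genus-$g$ surface $\Sigma$ with $g > 0$.

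The key idea is to argue by contradiction: suppose every fundamental cycle $C_e$ is surface \emph{separating}. Since the $C_e$ span the cycle space and ``being separating'' corresponds to being trivial (zero) in the first homology $H_1(\Sigma;\mathbb{F}_2)$, a cycle is surface separating precisely when its homology class is $0$. If all basis elements of the cycle space map to $0$ in $H_1(\Sigma;\mathbb{F}_2)$, then \emph{every} cycle of $G$ is separating, i.e. the map from the cycle space of $G$ into $H_1(\Sigma;\mathbb{F}_2)$ is the zero map. The plan is then to derive a contradiction from the fact that $g > 0$ forces $G$ to carry some non-separating cycle. Concretely, I would invoke the Euler characteristic / genus relation for the embedding: if every cycle of $G$ were separating, then $G$ would be embeddable in a surface of genus strictly less than $g$ (each non-separating cycle is what ``uses up'' a handle). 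Since by hypothesis $g$ is the genus, $G$ must contain at least one non-separating cycle $Z$. Because the fundamental cycles form a cycle-space basis, $Z = \sum_{e \in S} C_e$ (symmetric difference) for some nonempty set $S$ of chords. As homology classes, $[Z] = \sum_{e \in S} [C_e]$; if every $[C_e] = 0$ then $[Z] = 0$, contradicting that $Z$ is non-separating. Hence some fundamental cycle $C_{e}$ has $[C_e] \neq 0$ and is therefore non-separating, and $T \cup \{e\}$ contains it.

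The steps, in order, are: (1) set up fundamental cycles and recall they span the cycle space over $\mathbb{F}_2$; (2) identify ``surface separating'' with ``trivial in $H_1(\Sigma;\mathbb{F}_2)$,'' so that the separating cycles form a \emph{subspace} and the map ``homology class'' is $\mathbb{F}_2$-linear on the cycle space; (3) observe that linearity means the image of the cycle space is the span of the images of the fundamental cycles, so if all fundamental cycles are separating then so is every cycle; (4) use $g > 0$ to produce a non-separating cycle, yielding the contradiction.

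The main obstacle I anticipate is carefully justifying step~(4): that $g > 0$ guarantees the existence of \emph{some} non-separating cycle in $G$ as embedded. One must be careful that the embedding is a genuine genus-$g$ embedding (a $2$-cell embedding of minimal genus), since a non-cellular embedding could have handles ``unused'' by $G$. The clean route is to note that if every cycle of $G$ were separating, then cutting along a maximal independent family of cycles would reduce to a disjoint union of planar (disc-bounded) pieces, forcing the surface to be a sphere, i.e. $g = 0$, contrary to hypothesis; equivalently, one cites the standard fact that the rank of the image of the cycle space in $H_1(\Sigma;\mathbb{F}_2)$ equals $2g$ for a cellular embedding, which is positive exactly when $g > 0$. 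Making this homological bookkeeping precise (or, alternatively, appealing directly to the cited source \cite{ADR05,Thomassen90}) is the delicate part; the remaining linear-algebra argument over $\mathbb{F}_2$ is routine.
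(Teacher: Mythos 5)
The paper offers no proof of this lemma at all---it is imported verbatim as a cited result from \cite{ADR05} (with a pointer to \cite{Thomassen90})---so there is nothing internal to compare your argument against; I can only assess it on its own terms and against the standard proofs in the cited literature.

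Your homological argument is correct in outline and is a legitimate, self-contained route. The three load-bearing facts are all standard: the fundamental cycles of $T$ form an $\mathbb{F}_2$-basis of the cycle space; a \emph{simple} closed curve on a surface $\Sigma$ is separating iff its class in $H_1(\Sigma;\mathbb{F}_2)$ vanishes (each fundamental cycle is indeed a simple cycle, so this applies to them); and the class map is $\mathbb{F}_2$-linear, so if it kills a basis it kills everything. The one genuinely delicate point is the one you flag yourself: you need the inclusion $G\hookrightarrow\Sigma$ to induce a surjection of the cycle space onto $H_1(\Sigma;\mathbb{F}_2)$, which holds exactly when the embedding is cellular (then $\Sigma$ is obtained from $G$ by attaching $2$-cells, which can only kill homology). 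For a connected graph embedded in its minimal-genus surface, cellularity is guaranteed by Youngs' theorem, so with that citation in place your step~(4) closes and $H_1(\Sigma;\mathbb{F}_2)\neq 0$ for $g>0$ gives the contradiction. You should state the cellularity/Youngs step explicitly rather than leaving it as an anticipated obstacle, and you can drop the detour through an explicit non-separating cycle $Z$: surjectivity onto a nonzero group already forces some basis element to have nonzero image. For comparison, the proof in the cited sources is combinatorial rather than homological: Thomassen's \emph{3-path condition} shows that the separating (and, more generally, contractible) cycles form a class closed in a way that forces any cycle outside the class to be witnessed by a fundamental cycle of any spanning tree. That route avoids homology entirely and generalizes to cycle classes (such as contractible cycles) that are not linear subspaces, which your $\mathbb{F}_2$-linearity argument cannot handle; your route is shorter for the separating case precisely because separating cycles do form a subspace.
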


Clearly the theorem also works if we have an arborescence instead of
a spanning tree. The fundamental cycle $C$ corresponding to a non-tree edge
$e=(u,v)$ would consist of the non-tree edge $e$ and two vertex disjoint 
directed paths, say  $P,P'$, starting from a common ancestor of $(u,v)$ in the 
arborescence (we can include the common ancestor vertex in either of $P$ or
$P'$), 
and ending at $u,v$ respectively.

We now describe the steps of the algorithm for computing a path separator in a 
bounded genus digraph $G$:
\begin{enumerate}
  \item Decompose the graph into strongly connected components. If there
    is no strongly connected component of size larger than $2n/3$,
    then we are done, else let $G_0$ be the component larger than $2n/3$.
  \item Find an embedding of $G_0$ using~\cite{EK14}.
  \item Find an arborescence of $G_0$. To do this, first construct the path
    isolating weigths for edges using~\cite[Theorem~7]{GST19} so that there is a
    unique min weight path between any two vertices of $G_0$.
    Then for every vertex, we find the
    shortest path from a vertex say $r_0$ to it in \ULx{} using~\cite{TW10}
    (see~\cref{sec:prelims}). This will clearly give an arborescence of $G_0$
    rooted at $r_0$.
  \item By~\cref{lem:fund_cyc}, we known that there exists an edge such that its
    fundamental cycle, is a surface non-separating cycle. To find it we
    can go over every edge, compute its fundamental cycle $C$ and use the theorem 
    of~\cite{EK14} to check if every weakly connected component of $G_0-C_1$ has genus 
    strictly lesser than that of $G_0$. Let $C_1$ denote the surface
    non-separating cycle we found. This can be done in Logspace using
    transducers. 
  \item Let $G_1$ be the largest strongly connected component of
    $G_0-C_1$.
    If it is more than $2n/3$ in size then we do the steps above on $G_1$.
    Repeat until finally we have graph $G_{rem} = G - (C_1\cup C_2 \ldots C_l)$
    (where $l<g$) such that
    either all strongly connected components of $G_{rem}$ are smaller than
    $2n/3$ or all the weakly connected components are planar. In the latter
    case, find a $2/3$-path separator $P$ in
    the \scc{} in $\Grem$ that is of size more than $2n/3$.
  \item 
    For each cycle $C_i$ we found at each step, let $P_i,P'_i$ denote the two
    directed paths that form the cycle $C_i$ as explained above. The paths $P_1
    \cup P'_1 \ldots P_l
    \cup P'_l \cup P$ ($P$ if required) together form a $2/3$-separator of $G$.
  \item Use~\cref{corr:Kao_ext} to merge the paths $P_1, P'_1, \ldots P_l,P'_l,P$
    into a single path $P_s$ which is also a $2/3$-path separator of $G$. Output
    $P_s$.
\end{enumerate}

Steps 1 to 4 can be done in \Log{} or in \ULx. Step 5 involves at most $g$ iterations
of steps 1 to 4 and is therefore in \ULx{} as the class is closed under
composition for constant number of transducers. 
The merging of at most
$2l+1$ paths into a single path separator is in \ULx{} by~\cref{corr:Kao_ext}. 
Hence the entire algorithm to construct the separator is in \ULx{} as it uses
a constant number of transducers.

Therefore we can find a $2/3$-path separator in directed graphs of bounded
genus in \ULx. Using~\cref{prop:sep_to_DFS}, we can find a depth
first search tree in such graphs in \AC$^1$(\ULx), finishing the proof of part 1
of~\cref{thm:main}. 


\section{Path Separators in Single-Crossing-Minor-Free Graphs}\label{sec:sep_scmfree}

We start with computing the $3$-clique-sum decomposition by the 
$\NC$ algorithm of~\cite{EV21}.
We will denote this decomposition tree by $\T$, each node of which
is either a piece node or a clique node, and the weight of each
node is the number of vertices in the corresponding piece/clique.
(Note that by this convention, the sum of weights of all nodes of $\T$
will sum up to more than $n$ since some vertices would occur in multiple pieces).

\begin{definition} 
 Let $\pnode{\Tc}$ be a node of $\T$, and $\Tc$ its corresponding piece/clique, 
 such that the size of any connected component in $G-\Tc$ 
 is less than $n/2$. We call $\pnode{\Tc}$ a \cent{} node of $\T$, and $\Tc$ 
 a \cent{} piece/clique.
\end{definition}

Such a node always exists and can be found in Logspace by traversing along the
heaviest child in $\T$ (see Chapter 7 of~\cite{parameterized.algorithms} for
example). 

We will assume $\pnode{\Tc}$ to be the root of $\T$ hereafter.
We will denote the \emph{subtrees} of $\T$ that are children of $\pnode{\Tc}$
by $\{T_1,T_2, \ldots T_l\}$, the subgraphs of $G$ corresponding to these 
subtrees by $\{G_1,G_2, \ldots G_l\}$, and the cliques by which they are 
attached to $\Tc$ by $\{c_1,c_2 \ldots c_l\}$ respectively. These subgraphs 
might themselves consist of smaller subgraphs glued at the common clique. 
For example, $G_1$ might consist of $G_{11}, G_{12}, \ldots G_{1\indd}$ 
that are glued at the shared clique $c_1$ 
(i.e. $G_1 = G_{11}\oplus_{c_1} G_{12} \ldots \oplus_{c_1} G_{1\indd}$).
See~\cref{fig:scm_center} for reference. Note that because $\Tc$ is a \cent{} node, 
$\wt(G_{1i})- \wt(c_1) < n/2$ $\forall i\in [1..\indd]$.

If $\pnode{\Tc}$ is a clique node, then we have a $1/2$-separator of
at most three vertices and we are done. 
Therefore there are two cases to consider. 
Either $\Tc$ is a planar piece, or a piece of bounded treewidth.

\paragraph*{When $\Tc$ is of bounded treewidth}
In this case, we will refine the tree $\T$ by further decomposing $\Tc$.
Since $\Tc$ is of treewidth at most $\twh$, we can compute a tree decomposition 
of $\Tc$ such that every bag is of size at most $\twh +1$, in \Log{} using~\cite{EJT10}.
Let this tree be denoted by $T_{\Tc}$. 
Consider the subtree $T_1$ of $\T$ as described above, which has a node 
$\pnode{c_1}$ by which it is attached 
to $\Tc$. Since $c_1$ is a clique, 
there must be at least one bag in $T_{\Tc}$ that 
contains all vertices of $c_1$ (see~\cite{Bodlaender89}). 
Attach $\pnode{c_1}$ to any such node of $T_{\Tc}$. 
Do this procedure for all subtrees $T_1,T_2,\ldots T_{l} $. 
It is easy to see that 
this will result in another tree decomposition, and that at least one of 
the nodes of $T_{\Tc}$ will be a \cent{} node of this new tree. 
Hence we can use the procedure described above to find it in $\Log$. 
Since the bags of $T_{\Tc}$ are of size at most $\twh +1$, 
we get a $\sep{1/2}$ of $G$ consisting of constant number of vertices, and we are 
done.
\begin{figure}
\begin{minipage}{\textwidth}
  \includegraphics[scale=0.47]{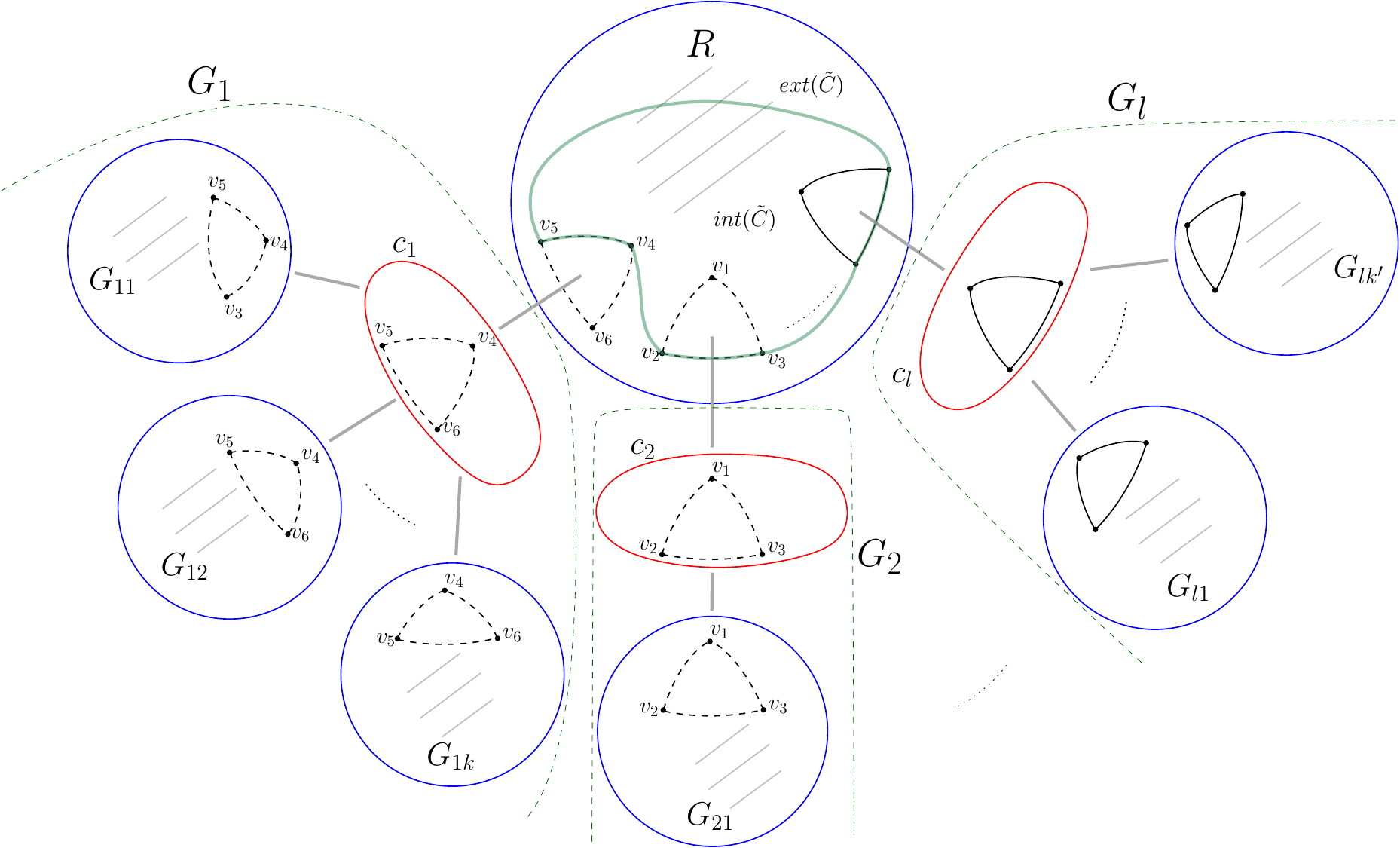}
  \caption{$\Tc$ is the central node with children subgraphs 
    $G_1,G_2,\ldots G_l$ attached to it via cliques $c_1,c_2,\ldots c_l$ respectively. 
    $G_1$ consists of subgraphs $G_{11},G_{12}\ldots G_{1k}$ glued at $c_1$. 
    $G_2$ is same as $G_{21}$.  Dashed edges denote virtual edges in $\Tc$.
    In the case when $\Tc$ is a planar piece, it is useful to think of it as a
    sphere, with separating triplets forming empty triangular faces in $R$, 
    on which $G_1,G_2,\ldots G_l$ are attached.
    The cycle in green is an example of \ie-cycle separator $\clos{C}$ (though gadget
    replacements are not shown in the figure). Of the attached subgraphs shown, 
    $G_2,G_l$ are \emph{interior components} with respect to $\tilde{C}$ whereas 
    $G_1$ is an \emph{exterior component} with respect to $\tilde{C}$. 
  }\label{fig:scm_center} 
\end{minipage} 
\end{figure}

\paragraph*{When $\Tc$ is planar}
We start with a simple idea, but encounter an obstacle. 
Then we show how to fix it.
Note that $\Tc$ (along with its virtual edges), is biconnected. 

To initialize, we assign weight $1$ to all vertices of $\Tc$ and
weight $0$ to all edges and faces of $\Tc$.
We project the weight of subgraphs $G_1,G_2 \ldots G_{\inda}$ on vertices, 
edges and faces of $\Tc$, at the respective cliques they are attached at.
Suppose the subgraph $G_k$ is attached to $R$ at clique $c_k$. Then we project
the weight as follows.
\begin{enumerate}
 \item If $c_k$ is a single vertex $v$, then assign a weight
   equal to $|V(G_k)|$ to \emph{vertex} $v$.
 \item If $c_k$ is a separating pair $(u,v)$, then we
   assign a weight equal to $|V(G_k)|-2$ to the \emph{edge}
  $(u,v)$ of $\Tc$. (We subtract $2$ since weights of vertices $u,v$ 
    are already accounted for).
 \item If $c_k$ is $3$-clique of vertices $(v_1,v_2,v_3)$, then
   we assign weight equal to $|V(G_k)|-3$ to the \emph{face} of 
   $\Tc$ enclosed by $c_k$.\label{item:3clique_proj}
\end{enumerate}
Let $\clos{\Tc}$ denote the weighted version of $\Tc$ obtained after projecting 
these weights. The sum of weights of all vertices, edges, faces of $\Tc$ is $n$. 
There are two possibilities. Suppose there exists a clique, say $c_1$ 
(which is either a vertex, or an edge, or a face of $\clos{\Tc}$), with weight at least 
$2n/3$. This means that total weight of subgraphs attached to $\Tc$ via $c_1$, 
i.e $\wt(G_{11}\oplus_{c_1} G_{12} \oplus_{c_1} \ldots \oplus G_{1\indd})$, is 
at least $2n/3$. Since each of the graphs $G_{11}, G_{12} \ldots G_{1\indd}$ is 
of size at most $n/2$ and they all get disconnected on removal of $c_1$, the 
vertices of $c_1$ form a $\sep{1/2}$ of $G$, and we are done. 

Therefore the remaining possibility we need to consider is one where 
every vertex, edge, face of $\clos{\Tc}$ has weight smaller than $2n/3$.
Since $\clos{\Tc}$ is planar and biconnected, we can find a 
$2/3$-\ie-cycle-separator of it in $\Log$ using~\cref{thm:IE_cyc_sep}. 
Let $\clos{C}$ be such a separator. Label one of the regions $\clos{C}$ divides 
$\Tc$ into as the interior of $\clos{C}$ and the other as the exterior 
of $\clos{C}$. Observe that for any subgraph that 
is attached to $\Tc$ (like $G_{2}$ for example), its vertices of attachment to 
$\Tc$ (the clique via which it is attached), will either all lie on or in the interior 
of $\clos{C}$, or all lie on or in the
exterior of $\clos{C}$. We call the attached subgraphs 
of the former kind \emph{interior components} of $\Tc$ with respect to $\clos{C}$, 
and the latter as \emph{exterior components} of $\Tc$ with respect to $\clos{C}$
(see~\cref{fig:scm_center}). We will drop the phrase `with respect to
$\clos{C}$' for brevity.  On removing the vertices of $\clos{C}$ in $G$, 
there cannot remain any path in $G$ from any vertex of an interior component 
of $\Tc$ to any vertex of an exterior component of $\Tc$. 
Therefore we have the following claim:
\begin{claim}\label{claim_sep_lift}
  Let $\clos{C}$ be an $\alpha$-\ie-cycle-separator of $\clos{\Tc}$, where
  $\alpha \in [\frac{1}{2},1)$. Then the 
  vertices of $\clos{C}$ form an \sep{\alpha} of $G$.
\end{claim}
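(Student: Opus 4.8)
The plan is to show that after deleting $V(\clos{C})$ from $G$ every connected component has at most $\alpha n$ vertices, which is exactly the definition of an \sep{\alpha}. I would classify the components of $G - V(\clos{C})$ into three types according to the planar embedding of $\Tc$: those containing at least one vertex of $\Tc$ lying strictly inside $\clos{C}$ (\emph{interior-side}), those containing at least one vertex of $\Tc$ strictly outside $\clos{C}$ (\emph{exterior-side}), and those containing no vertex of $\Tc$ at all (\emph{orphans}). Since $V(\clos{C}) \subseteq V(\Tc)$, every orphan is contained in a connected component of $G - V(\Tc)$; by the defining property of the \cent{} node, each such component has size less than $n/2 \le \alpha n$, so orphans are small.

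The first thing to establish is that no component is simultaneously interior-side and exterior-side, i.e. that in $G - V(\clos{C})$ there is no path from a strictly-interior $\Tc$-vertex to a strictly-exterior one. Here I would use two facts. First, because $\T$ is a tree rooted at $\pnode{\Tc}$, every attached subgraph meets the rest of $G$ only through its attachment clique, so any walk that leaves the interior region and enters the exterior region must traverse $\Tc$ through these cliques. Second, by the Jordan curve theorem applied to the planar embedding of $\Tc$, deleting the vertices of the cycle $\clos{C}$ separates the strictly-interior vertices of $\Tc$ from the strictly-exterior ones. The observation already recorded before the claim --- that the attachment clique $c_k$ of any subgraph lies on a single face of $\Tc$ and hence falls entirely on the closed interior side or the closed exterior side of $\clos{C}$ --- guarantees that an attached subgraph hangs off only one side, so it cannot be used to shortcut across $\clos{C}$. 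Combining these, any hypothetical interior-to-exterior path would have to cross $\clos{C}$ inside $\Tc$, i.e. pass through a deleted vertex, a contradiction.

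It then remains to bound the total size of the interior-side components by $\wt(\inte{\clos{C}})$ and of the exterior-side ones by $\wt(\exte{\clos{C}})$. I would argue a vertex-by-vertex correspondence between the $G$-vertices lying on the interior side and the interior weight of $\clos{\Tc}$: a strictly-interior vertex of $\Tc$ is charged to its own unit weight, while every internal vertex of an attached subgraph whose clique is interior is charged to the projected weight placed on that clique (the vertex, edge, or face), which by the previous paragraph is itself an interior vertex/edge/face whenever any of its vertices is a strictly-interior $\Tc$-vertex. This shows the number of interior-side $G$-vertices is at most $\wt(\inte{\clos{C}})$ (it may be strictly less, but an upper bound is all we need). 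Since the projected weights sum to $n$ and $\clos{C}$ is an $\alpha$-\ie-cycle-separator, $\wt(\inte{\clos{C}}) \le \alpha n$ and likewise $\wt(\exte{\clos{C}}) \le \alpha n$. Hence every interior-side component, every exterior-side component, and every orphan has size at most $\alpha n$, completing the argument.

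The main obstacle I anticipate is the bookkeeping in the weight correspondence when attachment cliques or their incident edges lie partly on $\clos{C}$ itself: vertices and edges lying on the separating cycle are not counted in either $\wt(\inte{\clos{C}})$ or $\wt(\exte{\clos{C}})$, so I must check that any subgraph which actually survives on the interior side still has its vertices counted in $\wt(\inte{\clos{C}})$ (to get the needed inequality), whereas subgraphs whose clique is entirely deleted are handled by the orphan bound rather than by the interior/exterior weight. Verifying that an edge or face incident to a strictly-interior vertex is itself interior in the embedding, and that these boundary cases line up consistently, is where the care is required.
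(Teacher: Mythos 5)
Your argument is correct and follows essentially the same route as the paper: the paper derives the claim directly from the two observations immediately preceding it (each attachment clique lies wholly on one closed side of $\clos{C}$, so removing $V(\clos{C})$ disconnects interior components from exterior ones), together with the fact that the projected weights were set up so that the interior (resp.\ exterior) weight of $\clos{\Tc}$ dominates the number of surviving vertices on that side. Your write-up simply makes this explicit, including the bookkeeping for components whose entire attachment clique lands on $\clos{C}$ (your ``orphans''), which the paper leaves implicit via the \cent{}-node property.
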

Now if we could replace the virtual edges of $\clos{C}$ by valid path segments 
in the attached subgraphs 
then we would get a simple cycle $C$ in $G$ 
which would be a $\sep{2/3}$ of $G$, 
and we would be done. 
However there is an issue here, as the virtual edges in a
$3$-clique need not capture \emph{``disjointness''} of the external connections
accurately. For example,
suppose $c_2 = \{v_1,v_2,v_3\}$ is a $3$-clique and the subgraph 
$G_{2}$ is attached to $\Tc$ at $c_2$ (as shown
in~\cref{fig:scm_center}). Suppose $(v_1,v_2)$, and 
$(v_1,v_3)$ are virtual edges in $\Tc$, and our cycle separator $\clos{C}$ uses both of these 
edges (they must necessarily be consecutive in $\clos{C}$). Though there 
must be paths between $v_1,v_2$, and between $v_1,v_3$ in $G_{2}$ (since it 
is attached via $3$-clique), there might not exist such paths that are internally 
disjoint. 
Hence instead of just projecting the weight of $G_{2}$ on the face formed by
$c_2$ in $\Tc$, we use a more appropriate gadget to mimic the connectivity 
of $G_{2}$. 

\paragraph*{Gadgets for the subgraphs attached to $\Tc$ via $3$-cliques.}

The gadget for an attached graph like $G_2$ is essentially a coarser planar 
projection of its block-cut tree, that captures the needed information
regarding the cut vertices associated with its terminals $\{v_1,v_2,v_3\}$.
We can adversarially assume that all 
three edges $(v_1,v_2),(v_2,v_3),(v_3,v_1)$ are virtual edges.
\begin{definition}
  Let $G_{2}$ be a graph and vertices $v_1,v_2,v_3 \in V(G_{2})$ be 
  called its terminals.
The \emph{disjoint path configuration} of $G_{2}$ with respect to its terminals 
$v_1,v_2,v_3$ consists of three boolean variables : 
$\dpc{v_1}{v_2}{v_3}, \dpc{v_2}{v_3}{v_1}, \dpc{v_3}{v_1}{v_2}$.
$\dpc{v_1}{v_2}{v_3}$ takes value \tru{} if there exists a path between 
$v_1$ and $v_3$ that goes via $v_2$, and \fal{} otherwise.
$\dpc{v_2}{v_3}{v_1}, \dpc{v_3}{v_1}{v_2}$ are defined similarly.
\end{definition}
Their are four cases we consider for possible disjoint path configurations, 
others are handled by symmetry. 

\textbf{Case 1 : }
\begin{tabular}{c c c}
  $\dpc{v_1}{v_2}{v_3}$ & $\dpc{v_2}{v_3}{v_1}$ & $\dpc{v_3}{v_1}{v_2}$ \\ 
   \tru & \tru & \tru
\end{tabular}

This means $G_2$ is biconnected, and hence 
we do not need a new gadget. We just keep the virtual edges 
between the clique vertices and assign weight equal to $|V(G_{2})|-3$ to 
the face enclosed by them (as described in~\cref{item:3clique_proj}). 
If $\clos{C}$ takes two virtual edges of $c_2$, 
we can use the two disjoint path algorithm 
in~\cite{KMV92} to find the corresponding paths in $G_{2}$. 

\textbf{Case 2 : }
\begin{tabular}{c c c}
  $\dpc{v_1}{v_2}{v_3}$ & $\dpc{v_2}{v_3}{v_1}$ & $\dpc{v_3}{v_1}{v_2}$ \\ 
   \tru & \tru & \fal
\end{tabular}

By Menger's theorem (see~\cite{Menger1927}, or Chapter 3 of~\cite{Diestel}), 
we know that there must be at least one cut 
vertex in $G_{2}$ that separates $\{v_1\}$ from $\{v_2,v_3\}$. Let $x$ be a 
cut vertex separating these. Let the connected component of $G_{2}-x $
containing $v_1$, augmented with $x$, be $H_1$. Let the component containing
$v_2,v_3$, augmented with $x$, be $H_0$ (i.e. $V(H_1)\cap V(H_0)=\{x\}$ and 
$V(H_1)\cup V(H_0)=V(G_{2})$). 
We choose $x$ such that there is no cut vertex in $H_0$, that separates 
$\{v_1\}$ and $\{v_2,v_3\}$ in $G_{2}$ (i.e. $x$ is the cut vertex described that is
`furthest' from $v_1$).
It is easy to see that $x$ is unique. 
By our choice of $x$, there 
must be a path in $H_0$ from $v_2$ to $v_3$ via $x$. 
Thus the gadget $G'_2$ is as shown in~\cref{fig:gadget_cases_2-3-4}, containing 
vertices $v_1,v_2,v_3,x$, and the total weight distributed among the 
faces as shown.
The following lemma shows the correctness of this procedure.
\begin{restatable}{lemma}{gadgetCase}\label{lem:gadget_case2}
  Suppose $G_{2}$, which is the subgraph attached to $\Tc$ via $c_2$ has the
  disjoint path configuration as described in Case 2. Suppose it is 
  replaced in $\clos{\Tc}$ by the gadget $G'_2$ defined above for this case. 
  Then if an $\alpha$-\ie-cycle separator $\clos{C}$ of $\clos{\Tc}$ uses any edges of this 
  gadget, we can replace them by paths in $G_{2}$ such that the new cycle
  $C$ is a simple cycle and it remains an $\alpha$-separator in $\Tc \oplus
  G_{2}$. 
\end{restatable}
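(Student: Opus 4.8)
The plan is to prove the lemma by a case analysis on how the separator $\clos{C}$ can pass through the gadget $G'_2$, establishing for each case both that the substitution yields a simple cycle and that the interior/exterior weight bounds are preserved. Since $\clos{C}$ is a simple cycle drawn in the planar graph $\clos{\Tc}$ and the only vertices $G'_2$ shares with the rest of $\clos{\Tc}$ are its terminals $v_1,v_2,v_3$, the intersection of $\clos{C}$ with $G'_2$ is a single simple subpath that enters and leaves $G'_2$ at two of the three terminals. Up to the symmetry between $v_2$ and $v_3$ this leaves only a handful of possibilities for the subpath: a direct passage between two terminals, or a passage between two terminals that uses the cut vertex $x$ and possibly the third terminal as an internal vertex.

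First I would handle realizability. For a passage between two terminals that does not route ``through'' the third, it suffices that $G_{2}$ is connected and that every path between $v_1$ and $v_2$ (or between $v_1$ and $v_3$) must pass through the cut vertex $x$; a suitable simple $G_{2}$-path then exists, using the $v_2$--$x$--$v_3$ path guaranteed in $H_0$ together with the connectivity of $H_1$. For a passage that uses the third terminal $v_k$ as an internal vertex, the gadget is built so that this is possible exactly when $\dpc{\cdot}{v_k}{\cdot}=\tru$: because every connection from $v_1$ is routed through $x$ in the gadget (reflecting that $x$ is the cut vertex separating $v_1$ from $\{v_2,v_3\}$), no simple gadget subpath can have $v_1$ as an internal vertex, matching $\dpc{v_3}{v_1}{v_2}=\fal$, whereas subpaths with $v_2$ or $v_3$ internal are available, matching $\dpc{v_1}{v_2}{v_3}=\dpc{v_2}{v_3}{v_1}=\tru$. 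In each realizable case the configuration predicate being $\tru$ furnishes a genuine simple path in $G_{2}$ through the required intermediate terminal; for two disjoint single passages inside a biconnected block one may also invoke the two-disjoint-path routine of~\cite{KMV92}, as in Case~1.

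Next I would check simplicity of the resulting cycle $C$. Every replacement path lies entirely inside $G_{2}$ and meets the rest of $\Tc$ only at the terminals it is anchored to; its internal vertices, in particular $x$, are interior vertices of $G_{2}$ and hence disjoint from all $\Tc$-vertices on $C$. Since $\clos{C}$ is simple it visits each terminal at most once, so splicing the $G_{2}$-paths in place of the gadget edges yields a single closed walk that never repeats a vertex, i.e.\ a simple cycle of $\Tc\oplus G_{2}$.

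The main obstacle is the weight bookkeeping needed to show $C$ remains an $\alpha$-separator, and this is where the choice of $x$ and the face weights of $G'_2$ must be used precisely. The idea is that the cut vertex $x$ partitions $V(G_{2})$, minus the vertices placed on $C$, into the part lying in $H_1$ and the two parts of $H_0$ on either side of the $v_2$--$x$--$v_3$ path, and that each such part lies wholly in $\inte{C}$ or wholly in $\exte{C}$. I would verify, traversal by traversal, that the weight $\clos{C}$ assigns to the gadget faces on its interior (respectively exterior) side equals the total mass of the $G_{2}$-vertices that $C$ places in $\inte{C}$ (respectively $\exte{C}$); the face weights of $G'_2$ are defined exactly so that this accounting is exact, up to the terminal and cut vertices that lie on $C$ itself and are therefore uncounted. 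Granting this, the bounds $\wt(\inte{\clos{C}})\le\alpha n$ and $\wt(\exte{\clos{C}})\le\alpha n$ coming from \cref{thm:IE_cyc_sep} transfer verbatim to $\wt(\inte{C})\le\alpha n$ and $\wt(\exte{C})\le\alpha n$ in $\Tc\oplus G_{2}$. The delicate point to get right is that no mass of $G_{2}$ can ``leak'' to the wrong side of $C$, which is exactly what the cut-vertex property of $x$ guarantees: removing the realized path (which contains $x$ in the relevant cases) disconnects the leftover parts of $G_{2}$ so that each attaches to $C$ along only one of its two sides.
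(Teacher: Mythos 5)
Your overall strategy matches the paper's: classify the (single) subpath along which $\clos{C}$ meets the gadget, realize each gadget segment by an actual path in $G_{2}$ using the cut vertex $x$ and the disjoint path configuration, check simplicity of the spliced cycle, and then do weight accounting. The realizability and simplicity parts are fine and essentially identical to the paper's treatment.

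The weak point is the central accounting claim. You assert that the gadget weight on the interior (resp.\ exterior) side of $\clos{C}$ \emph{equals} the mass of the $G_{2}$-vertices that $C$ places in $\inte{C}$ (resp.\ $\exte{C}$), up to vertices lying on $C$ itself. That equality is false in general. For instance, when $\clos{C}$ uses only the edge $(v_2,v_3)$ and $v_1$ is strictly interior, the gadget contributes $\wt(G_{2})-2$ to the interior, but after substituting a real $v_2$-$v_3$ path every internal vertex of that path leaves the interior count; and when $\clos{C}$ uses the segment $v_2$-$x$-$v_3$, the face carrying $\wt(H_0)-3$ lies on the exterior side of $\clos{C}$ while the corresponding vertices of $H_0$ end up in components wholly contained in $G_{2}$ and contribute nothing to any exterior component. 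What is true, and all that is needed, is a one-sided inequality: on each side the gadget \emph{over-counts} the true contribution of $G_{2}$, so the bounds $\wt(\inte{\clos{C}}),\wt(\exte{\clos{C}})\le\alpha n$ still transfer. The paper streamlines this by first invoking the earlier observation that the attachment clique $c_2$ lies entirely on one side of $\clos{C}$ (say the interior), so the exterior contribution of $G_{2}$ is zero and only the interior inequality must be checked in each subcase. Finally, your accounting says nothing about the pieces of $G_{2}$ that become disconnected from everything (e.g.\ the $H_0$ remnants when $v_2,x,v_3$ all lie on $C$); these are not captured by an interior/exterior dichotomy and need the separate observation that any component contained in an attached subgraph has size less than $n/2\le\alpha n$ by the choice of the \cent{} piece.
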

\begin{proof}
 We defer the proof to~\cref{app:sep_scmfree}
\end{proof}


\textbf{Case 3 : }
\begin{tabular}{c c c}
  $\dpc{v_1}{v_2}{v_3}$ & $\dpc{v_2}{v_3}{v_1}$ & $\dpc{v_3}{v_1}{v_2}$ \\ 
   \fal& \tru & \fal
\end{tabular}

The gadget is shown in~\cref{fig:gadget_cases_2-3-4}.
As in the previous case, we first find the `furthest' vertex $x$ separating $\{v_2,v_3\}$ 
and $\{v_1\}$. 
Then, in 
the component containing 
$v_2,v_3$ (augmented with $x$), 
we find the cut vertex $y$ 
separating $\{x,v_3\}$ and $\{v_2\}$, that is furthest from $v_2$.

\textbf{Case 4 : }
\begin{tabular}{c c c}
  $\dpc{v_1}{v_2}{v_3}$ & $\dpc{v_2}{v_3}{v_1}$ & $\dpc{v_3}{v_1}{v_2}$ \\ 
   \fal & \fal & \fal
\end{tabular}

Like in previous cases, we find the `furthest' cut vertices $x,y,z$ that separate 
$v_1,v_2,v_3$ from rest of graph. There are two cases, either there exists a
single cut vertex $w$ such that $G_{2}-w$ disconnects all of $v_1,v_2,v_3$ from
each other, which is essentially the case when $x=y=z=w$. 
Other case is when there is no such cut vertex. 
The gadget is shown in~\cref{fig:gadget_cases_2-3-4}. The proof of correctness of
gadgets for Cases 3,4 is similar to the proof of~\cref{lem:gadget_case2} presented 
in~\cref{app:sep_scmfree}.
\begin{figure}
\begin{minipage}{\textwidth}
  \includegraphics[scale=0.57]{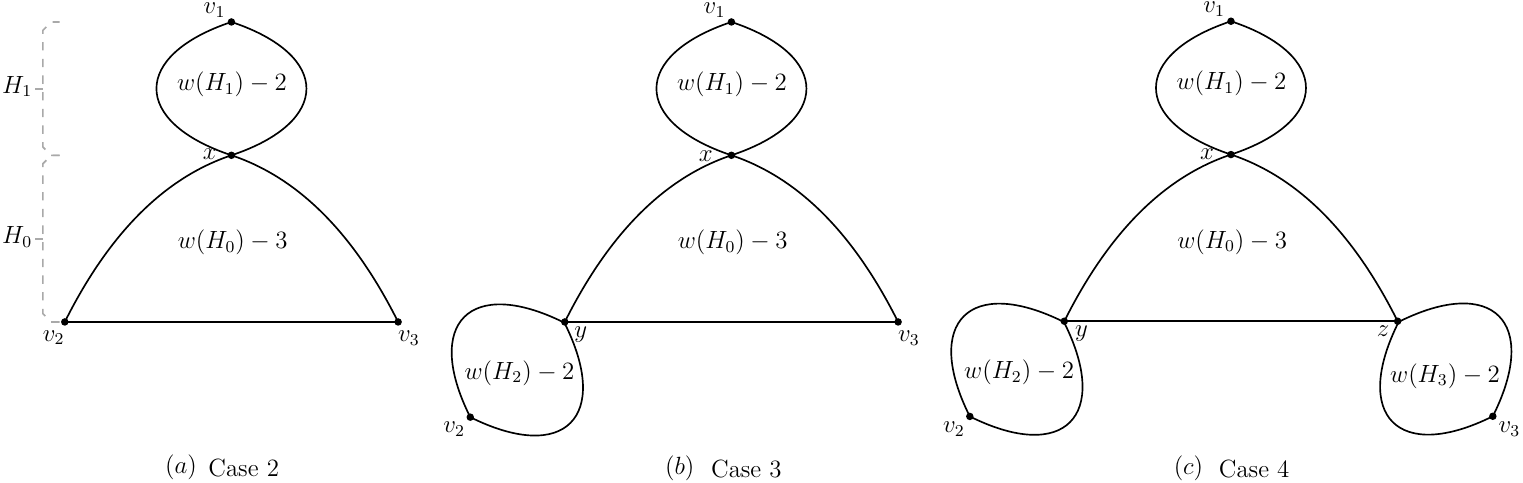}
\caption{Figure (a) shows the gadget for graph $G_{2}$ for Case 2, (b) for
    Case 3 and 
    (c) for Case 4. The faces formed by cycles of the gadget corresponding to 
    the subgraphs $H_0,H_1,H_2,H_3$ of $G_2$ are assigned the weight functions as
    shown. The total weight in each gadgets sums up to $\wt(G_{2})$.
    The subgraph $H_0$ of $G_2$ has no cut vertex in each case. 
    In the gadget for Case 4, it is possible that the block $H_0$ is 
    empty, i.e. there exists a vertex $(x=y=z)$ in $G_2$ which on removal
    disconnects all of $v_1,v_2,v_3$ from each other. We don't draw that gadget
    separately since its structure is clear.}\label{fig:gadget_cases_2-3-4}
\end{minipage}
\end{figure}

To find these cut vertices, we can in parallel check if each vertex
of $G_{2}$ satisfies the properties of vertices `$x,y,z$' described above. 
Thus the gadgets can be constructed in \NC. 

We described the procedure for $G_{2}$, but the construction can be done in
parallel for all subgraphs attached to $\Tc$ via $3$-cliques. 
After finding a $2/3$-\ie-cycle separator of $\clos{\Tc}$ separator using~\cref{thm:IE_cyc_sep},
we can, in parallel, replace the virtual/gadget edges by acutal paths in the attached
subgraphs to get the cycle separator of $G$.

In the blocks of gadgets where we need to find a path between two vertices 
via a third vertex, we can use the $2$-disjoint path
algorithm described in~\cite{KMV92}.
Thus we can construct a $2/3$-path-separator of $G$ in \NC. 


Combined with~\cref{prop:sep_to_DFS}, this finished the proof of part 2
of~\cref{thm:main}. 


\section{DFS in bounded treewidth directed graphs in $\Log$}\label{sec:DFS_btw}

In this section we will show that for a digraph with treewidth bounded by
a constant, 
we can construct a depth first search tree in $\Log$.
For an undirected graph, it is easy to see that a DFS tree of the bidirected 
version of it is also a DFS tree of the original graph.

The input graph $G$ can be seen as a structure
$\struct$ whose universe, $\uod{\struct}$ is $V(G)$, and its
vocabulary
consists of binary relation symbol $E$
which is interpreted as the edge relation $E(G)$ over $V(G)$. The Gaifman graph
of $\struct$, $\gaif{\struct}$, is isomophic to $G$.
We will need to add a linear order $\ord$ on
$\uod{\struct}$, which would amount to adding $|V(G)|$ many edges
in $\gaif{\struct}$. A known result in model theory (see~\cite[Lemma~1]{BD15},
also~\cite[Theorem~\MakeUppercase{\romannumeral 6}.4]{CF12}) is that
we can add a linear order to a structure, such that the treewidth of the structure
is increased by at most a constant.
It is shown in \cite[Lemma~5.2]{Balaji16} that we can construct such a linear order
in $\Log$.
%
Therefore after computing the order $\ord$ and adding it to $\struct$ the new
structure $\struct'$ also has bounded treewidth. 
Moreover, for any two vertices $u,v$, we can query if $u$ is ``lesser than'' $v$
in the transitive closure of $\ord$, in logspace. 

We also add the edges $E(G)$ to our universe of
discourse, as well as an incidence relations : $tail(e,v)$ which is true iff
the edge $e \in E(G)$ has vertex $v$ as its tail, and $head(e,v)$,
which is true iff the edge $e \in E(G)$ has vertex $v$ as its head.
Let the structure obtained by the augmentation of these incidence relations to 
$\struct'$ be denoted by $\struct''$. 
If the treewidth of $\struct'$ is $\tau$, then  
the treewidth of $\struct''$ is at most $\tau+ \tau^2$ (by introducing a new
vertex for each edge, in every bag), and therefore still a constant. 
In order to express an element of $\uod{\struct''}$ being a vertex or
an edge of $G$, we will also need unary relations for the same. These
however clearly do not add to treewidth of $\gaif{\struct''}$, and we
use phrases like $\exists v \in V\ldots , \exists e \in E\ldots$ as
syntactic sugar. Symbols like $u,v$ will generally be used for elements of
universe that correspond to vertices of $G$ and $e$
for elements corresponding to edges of $G$.

  We use $u \lord{\ord} v$ to denote that vertex
$u$ is lesser than vertex $v$ in the transitive closure of $\ord$.
An ordering over vertices defines
a unique lexicographically first DFS tree.
The algorithm will go over each edge of $E$, and use~\cref{thm:EJT_C} to query
if it belongs to the lex-first DFS tree of $G$, in $\Log$. 
Our universe is $V \cup E$, and the relations on
it given are $tail, head$, 
and the linear order $\ord$. To express the membership of an edge in
the lex-first DFS tree with respect to $\ord$, we will use a theorem connecting
the lex-first DFS tree to lex-first paths. 

We define lexicographic ordering on paths starting from a common
vertex:
\begin{definition}
  Let $P_1,P_2$ be two paths in $G$ starting from $r$. We say that $P_1$ is
  lexicographically lesser than $P_2$ (with respect to $\ord$) if at the first point of
  divergence starting from $r$, $P_1$ diverges to a vertex that is lesser in
  the given ordering than the one $P_2$ diverges to. We denote this as
  $P_1 \lord{\ord} P_2$.
\end{definition}
This naturally defines the notion of the lexicographically least path starting
from $r$ and ending at a vertex $v$. We call it the lex-first path from $r$ to $v$.
We use the following result which is shown in the proof of Theorem~11 in~\cite{TK95}:
\begin{theorem}
  Let $\tlex$ be the lex-first DFS tree of $G$ with respect to a given linear
  order $\ord$. Then
  $\forall v \in V(G)$, the unique path from $r$ to $v$ in $\tlex$ is exactly the
  lex-first path from $r$ to $v$ (with respect to $\ord$) in $G$.
\end{theorem}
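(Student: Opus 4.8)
The plan is to prove the statement through an equivalent \emph{greedy} characterization of the lex-first path and then to show that the lex-first DFS tree realizes exactly this greedy rule. First I would reformulate the lex-first path from $r$ to $v$ as the path $(r = g_0, g_1, g_2, \ldots, v)$ obtained greedily: having fixed the prefix $g_0, \ldots, g_i$, let $g_{i+1}$ be the $\ord$-smallest out-neighbour $w$ of $g_i$ such that $v$ is still reachable from $w$ in $G \setminus \{g_0, \ldots, g_i\}$. A short exchange argument shows this greedy path is exactly the lex-first path of the definition: if some simple path $Q$ from $r$ to $v$ were smaller, then at the first point of divergence $g_j$ it would step to a vertex $y \lord{\ord} g_{j+1}$ from which $v$ is reachable avoiding $\{g_0, \ldots, g_j\}$ (the remainder of $Q$ witnesses this), contradicting the minimality in the choice of $g_{j+1}$. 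This reformulation is convenient because it is local: it refers only to out-neighbours and to reachability in the graph with the current prefix deleted.

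Next I would fix $v$, write the tree path in $\tlex$ as $(r = u_0, u_1, \ldots, u_k = v)$, and prove that for every $i$, the vertex $u_{i+1}$ is the greedy choice at $u_i$, i.e. the $\ord$-smallest out-neighbour of $u_i$ from which $v$ is reachable in $G \setminus \{u_0, \ldots, u_i\}$; a straightforward induction then identifies $u_i = g_i$ for all $i$. The vertex $u_{i+1}$ certainly reaches $v$ in that deleted graph, via the tree-path tail $u_{i+1}, \ldots, v$, which avoids the ancestors $u_0, \ldots, u_i$. The content is to rule out a smaller competitor: suppose $y \lord{\ord} u_{i+1}$ is an out-neighbour of $u_i$ admitting a directed path $\pi$ from $y$ to $v$ in $G \setminus \{u_0, \ldots, u_i\}$. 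Since the DFS, standing at $u_i$, always advances to the smallest \emph{unvisited} out-neighbour and it chose $u_{i+1}$, the vertex $y$ must already have been visited (indeed finished) at that moment; hence $y$ is discovered and finished strictly before $u_{i+1}$, and therefore before $v$, which lies in the subtree of $u_{i+1}$.

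The crux, and the step I expect to be the main obstacle, is turning ``$y$ is finished before $v$ is discovered, yet $y$ reaches $v$ avoiding $\{u_0, \ldots, u_i\}$'' into a contradiction, since in the directed setting cross edges are present. I would argue along $\pi = (y = p_0, p_1, \ldots, p_m = v)$ using discovery and finish times: $p_0$ is finished but $p_m = v$ is not even discovered by the time $y$ finishes, so there is a first index $t$ whose vertex $p_t$ is not yet finished at that moment. Because a vertex is finished only after all its out-edges have been scanned, the out-neighbour $p_t$ of $p_{t-1}$ must have been discovered by the time $p_{t-1}$ finished; being discovered-but-unfinished, $p_t$ is \emph{gray}, so it sits on the DFS stack and is an ancestor of $y$. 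But $p_t$ lies on $\pi$ and hence avoids $\{u_0, \ldots, u_i\}$, while it still reaches $v$ along the tail of $\pi$; iterating this observation up the stack forces $v$ to be discovered inside the subtree of an ancestor active before $u_{i+1}$, contradicting the fact that $v$ is discovered after $u_{i+1}$. For undirected graphs the argument collapses to the familiar property that undirected DFS has no cross edges, i.e. every non-tree edge joins a vertex to an ancestor: any already-visited out-neighbour $y$ of $u_i$ is then a descendant of $u_i$ in a completely explored earlier child subtree, and every edge leaving such a subtree returns to $\{u_0, \ldots, u_i\}$, so $\pi$ cannot escape toward $v$ without meeting a deleted vertex. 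Once the greedy step is established, the tree path and the greedy path agree coordinate-wise, and combining this with the first paragraph identifies $u_0 u_1 \cdots u_k$ with the lex-first path from $r$ to $v$, completing the proof.
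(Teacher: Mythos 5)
The paper does not actually prove this statement: it is quoted from the proof of Theorem~11 in~\cite{TK95}, so there is no in-paper argument to compare yours against. Taken on its own, your proof is correct and self-contained, and its two-stage structure is a natural one: (i) recharacterise the lex-first $r$--$v$ path greedily as ``at each vertex take the $\ord$-smallest out-neighbour from which $v$ is still reachable in the graph minus the current prefix'' (the exchange argument works because the suffix of any competing simple path witnesses reachability avoiding the shared prefix), and (ii) show the $\tlex$-path obeys the same greedy rule. The one step that needs to be spelled out more carefully is the ``iterate up the stack'' argument in the directed case. Having located the first unfinished vertex $p_t$ of $\pi$ at time $f[y]$ and concluded it is a gray proper ancestor of $y$ outside $\{u_0,\dots,u_i\}$, you cannot stop there: you must rerun the argument at time $f[p_t]$ on the tail of $\pi$, note that the gray ancestor you land on moves strictly upward but can never enter $\{u_0,\dots,u_i\}$, and hence is eventually the earliest ancestor of $y$ that branches off the path $u_0,\dots,u_i$ --- a sibling of some $u_{\ell+1}$ that is explored and finished before $u_{\ell+1}$ is discovered. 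At that point the nesting of discovery/finish intervals forces $v$ to be a descendant of that sibling, contradicting that $v$ lies in the subtree of $u_{i+1}$. Your sketch names exactly this contradiction, so the proof goes through; just be aware that the termination of the iteration and the final appeal to the parenthesis theorem are the load-bearing details, and that in the easy case where $y$ becomes an earlier child of $u_i$ no iteration is needed since the gray stack at $f[y]$ is exactly $u_0,\dots,u_i,y$.
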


Thus in order to check if an edge $(u,v)$ belongs to $\tlex$ (rooted at $r$),
it suffices to check if it is the last edge of the lex-min path in $G$ from
$r$ to $v$. We do this by expressing this property in $\mso$ and
using~\cref{thm:EJT_C}.
We now define some formulae (their semantics) to construct the expression of the stated
property.

The variables $P,P_1,P_2$ denote sets of edges:\\ 
$\phiedge{P}{u}{v}$, which is true iff there exists an edge $e \in P$ such that
$e = (u,v)$ and $e \in P$.\\
$\phipath{P}$, which is true iff the edges in $P$ form a single simple
path with $r$ as the starting point.\\ 
$\phiep{P}{v}$, which true iff there exists exactly one in-neighbour of $v$
in $P$, and no out-neighbour of $v$ in $P$.\\ 
$\philex{P_1}{P_2}$, which is true iff $P_1,P_2$ satisfy
$\phipath{P_1},\phipath{P_2}$ respectively, and $P_1\lord{\ord} P_2$.

Using these formulae, we define $\phidfs{u}{v}$, which is true iff
the edge $(u,v)$ is part of the lex-first DFS tree, $\tlex$.
We use the above theorem and express it as:
\begin{equation*}
 \begin{split}
   \phidfs{u}{v} :  \exists P_1 ( & \phiedge{P}{u}{v}
                    \wedge \phipath{P_1} \wedge \phiep{P_1}{v} \wedge \\
   & \forall P_2 ( \phipath{P_2} \wedge \phiep{P_2}{v} \Rightarrow \philex{P_1}{P_2} ) )
 \end{split}
\end{equation*}
Thus the logspace algorithm is the following :\\
For each edge $(u,v)$ in $E$, query $\phidfs{u}{v}$
and add to output tape iff the query returns yes.
A DFS numbering or order of traversal of the vertices of
$\tlex$ can easily be obtained from $\tlex$ by doing an
Euler tree traversal of $\tlex$ with respect to the 
ordering $\ord$.

Now all that remains is to express the formulas $\phiedge{P}{u}{v}, \phipath{P}, \phiep{P}{v}, 
\philex{P_1}{P_2}$ in $\mso$. We defer that to~\cref{app:DFS_btw_app}.

\section{Maximal Paths in planar graphs in $\Log$}\label{sec:maximal_path_undir}
In this section we will show that given a planar undirected graph $G$ and a
vertex $r$ in it, we can find a maximal path in $G$ starting from $r$, in $\Log$. 
We will assume that our graph is embedded in a plane with a face 
marked as outer face (a planar embedding 
can be found in $\Log$ by~\cite{AM04,DP11}). 
The first step is to reduce the problem to that of finding a maximal 
path in a triconnected component of $G$ that is $3$-connected (i.e. it is not a
cycle or a dipole). 
We decompose the graph into its triconnected components, which can be done 
in $\Log$ as described in~\cite{DLNT09}. 
We can choose a leaf piece $\lnode$ of this decomposition tree which is connected 
to its parent piece $\parnode$ by separating pair, say $(r_0,r_1)$.
The basic idea is to find a path from $r$ ($\lnode$ is chosen such that it does 
not contain $r$) to one of the vertices of the separating pair, say $r_0$, 
without going through $r_1$. 
Then it suffices to find a maximal path in $\lnode$ starting from $r_0$, that 
does not end at $r_1$.
Some care must be taken in the reduction however, and cases like when 
$\lnode$ is a simple cycle need to be handled.  
\subsection{Reduction from planar graphs to its triconnected components}

\paragraph*{Reduction to biconnected graphs.}
In the first step of the reduction, we reduce the problem to the case of biconnected graphs.
(this step is also done in~\cite{AM87}). 
This can be done easily by decomposing the original graph into its biconnected components
using its block-cut tree. Root the tree at the
block that contains $r$, and consider any block $H$ corresponding
to a leaf node of the block cut tree. Let $r'$ be the articulation vertex of $H$.
Then finding a path in the graph from $r$ to $r'$, and then appending to it
a maximal path of $H$ starting from $r'$, will clearly give us a maximal
path of the input graph. The reduction is in $\Log$ using Reingold's 
algorithm~\cite{Reingold08} 
along with (constantly many) transducers. Henceforth, we assume that our input 
graph $G$ is a biconnected graph.

\paragraph*{Reduction to tri-connected components.} 

Given that our input graph $G$ is biconnected, we now decompose $G$ further 
by $2$-clique sums into triconnected components,
using the $\logs$ algorithm described in~\cite{DLNTW22}. 
When looking the embedding of a piece by itself in isolation (that is ignoring the rest
of the graph), it's outer face is naturally inherited from the outer face 
face of $G$. 
In particular, the separating pairs of a piece and the edges connecting them
(possibly virtual edges) all lie on the outer face in its own standalone
embedding.  Root the decomposition
tree $T_{G}$, at the piece containing $r$. We will pick a leaf piece 
which is not a triple bond (It can be seen in the decomposition
of~\cite{DLNTW22} that not every leaf node of the triconnected decomposition
tree can be a triple bond. In fact every triple bond piece is accompanied with
a sibling piece that is either a cycle or a $3$-connected graph. Hence we can ignore
the triple bond pieces for our purposes) and reduce the problem to finding a maximal path
in that, possibly with some additional constraints.  

Let $L$ denote the leaf piece that we will 
pick to reduce the problem to. We will assume that $L$ has parent piece
$\parnode$, and $L$ is attached to $\parnode$ via separating pair $c=\{r_0,r_1\}$. 
In the piece $\parnode$, the separating 
pair $c$ lies on the boundary of its outer face 
$f_{\parnode}$ (see~\cref{fig:maximal_leaf}). 
We will use $G'$ to denote the graph obtained by augmenting $G-L$ with $c$ (that
is removing $L$ from $G$ but keeping vertices, edges of $c$ intact). Note that
$G'$ will be biconnected (see~\cref{obs:outer_cycle}).
We consider the following cases.
\begin{figure}[t]
\begin{minipage}{\textwidth}
  \includegraphics[scale=0.55]{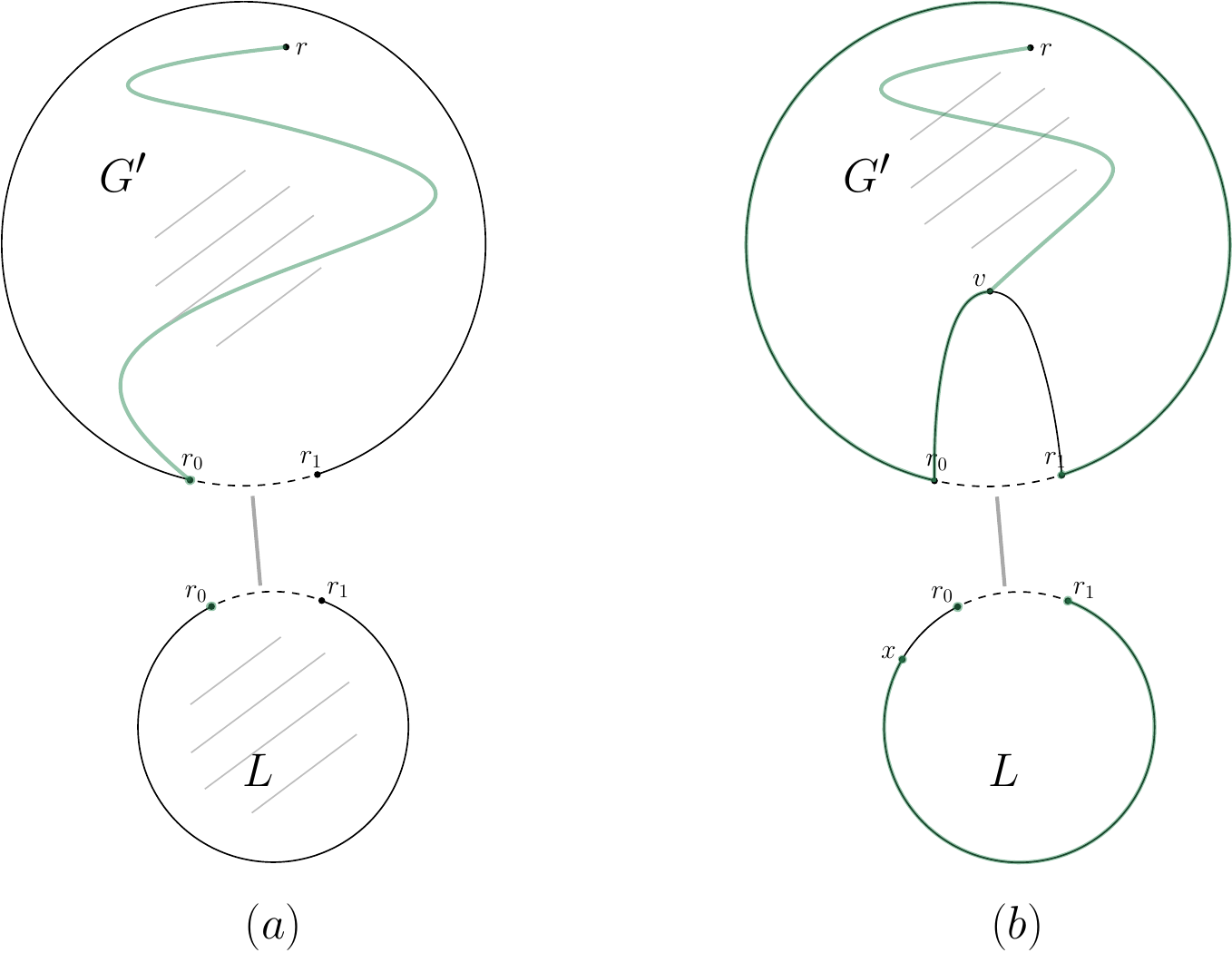}
  \caption{ Both cases for reduction of maximal path problem to the triconnected
  components of $G$. Dashed edges denote virtual edges in $\Tc$. 
  The edge $(r_0,r_1)$ may be real or virtual. In these figures it is virtual.
  Figure (a) shows 
  the case when there is a leaf node $L$ that
  is a $3$-connected graph. 
   From $r$ we find a path in $G'$ to reach 
   $r_0$ as shown in figure (a). This reduces the problem to
  finding a maximal path in $L$ that does not end at $r_1$ and does not use the
  virtual edge $(r_0,r_1)$.\\ 
  Figure (b) illustrates the case when all the leaf nodes (and hence $L$) are cycles. 
We find a path from $r$ touching the boundary of $G'$ (which must be a simple
cycle) for the first time at $v$. From $v$ we encircle the boundary of $G'$
first reaching $r_0$ and finally $r_1$. From $r_1$ we encircle the cycle $L$
till $x$. Since both the neighbours of $x$ are visited, this path is maximal.
}\label{fig:maximal_leaf}
\end{minipage}
\end{figure}

\begin{itemize}
  
  \item \emph{Case 1: There exists a leaf piece in $T_G$ that is a non-trivial $3$-connected
    graph (that is, not a cycle).} 

    In that case we pick such a piece as $L$, and look at the graph $G'$ which
    contains $r$ (see part $(a)$ in \cref{fig:maximal_leaf}).  In $G'$, find a
    path $P_0$ from $r$ to $r_0$ that does not go through $r_1$.
    We can find such a path by finding a path from $r$ to $r_0$ in $G'-r_1$. 
    Since $G'$ is biconnected, such a path will exist. 
    Now its suffices to find a maximal path in $L$ starting
    from $r_0$, that does not end at $r_1$ and does not use the virtual edge
    $(r_0,r_1)$, since we can append it to $P_0$ and 
    get a maximal path in $G$. Thus the reduction is complete in this case. 
  
  \item Case 2: \emph{None of the leaf pieces of $T_G$ is a non-trivial
    $3$-connected graph.} 

   This means they are all cycles (we ignore dipoles). Consider any leaf
   piece $L$ that is a cycle and is attached to the parent $\parnode$ at $r_0,
   r_1$ (see part $(b)$ of \cref{fig:maximal_leaf}), 
   Since $G'$ is biconnected, the boundary of its outer face is a simple cycle.
   Therefor $r_0,r_1$ have two disjoint segments connecting them, both part of
   the outer boundary. From $r$, we can find a path $P_0$ which touches the
   boundary for the first time at its other end point, say $v$. From $v$, we 
   append the path $P_1$ by circling around the boundary until we first reach, say
   $r_0$, and then $r_1$. From $r_1$, we can find a path $P_2$ in $L$, by
   encircling it till the neighbour of $r_0$ other than $r_1$ ($L$ is a simple
   cycle). Clearly $P_0.P_1.P_2$ is a maximal path and each of these can be
   constructed in \Log. Hence we are done in this case. 
\end{itemize}
Each of the steps above, computing the decomposition 
into triconnected pieces, constructing paths $P_0,P_1,P_2$ 
can be done in in $\Log$ using the algorithms of~\cite{Reingold08,DLNTW22}.
The output of the reduction steps of Case 1 (The path $P_0$, the piece $L$, $r_0,r_1$) 
can be stored on a transducer tape and we can proceed further.

\subsection{Maximal Paths in $3$-connected components}
By the reduction above, we can assume that our input graph is a 
$3$-connected component of $G$, say $L$, and 
our goal is to find a maximal path in $\lnode$ 
starting at $r_0$ and not ending at $r_1$. 
Note that the latter condition makes this problem more general than just 
finding a maximal path in a $3$-connected graph. 
Also, note that $r_0,r_1$ lie on a common face of $L$, and the edge 
$(r_0,r_1)$ might be a virtual edge in $\lnode$.
Hence we will in general look at the graph 
$L' = L - (r_0,r_1)$, that is, the graph obtained from $L$ by removing the virtual 
edge $(r_0,r_1)$ (not the vertices). 

Let the outer face of $L$, on which both $r_0,r_1$ lie, be 
called $\oface$. The outer face of $L'$ is naturally obtained by merging 
$\oface$ with the other face adjacent to $(r_0,r_1$). 
Let this 
be called $\ofacex$. 
We first note the following observation: 
\begin{restatable}{observation}{outerCycle}\label{obs:outer_cycle}
 The boundary of the face $\ofacex$ of $\lnode'$ is a simple cycle.
\end{restatable}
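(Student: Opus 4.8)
The plan is to deduce the statement from the standard planarity fact that \emph{every face of a $2$-connected plane graph is bounded by a simple cycle} (see, e.g., \cite{Diestel}), applied to the plane graph $\lnode' = \lnode - (r_0,r_1)$. The whole proof then reduces to two routine verifications: that $\ofacex$ is genuinely a face of $\lnode'$, and — the one real point — that $\lnode'$ is $2$-connected. First I would record the properties of $\lnode$ supplied by the reduction (Case~1): $\lnode$ is a non-trivial $3$-connected planar graph, so $|V(\lnode)|\ge 4$, and $(r_0,r_1)$ is an edge of $\lnode$ (real or virtual) lying on the boundary of the outer face $\oface$. Since a $3$-connected graph has edge-connectivity at least $3$, the edge $(r_0,r_1)$ is not a bridge, hence in the embedding it borders two \emph{distinct} faces $\oface$ and some $f'$; deleting it merges exactly these two into the single face $\ofacex$ of $\lnode'$. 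This confirms that $\ofacex$ is a well-defined face of the plane graph $\lnode'$, matching its definition as the merge of $\oface$ with the other face incident to $(r_0,r_1)$.

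\textbf{Key step.} I would then show that deleting a single edge from a $3$-connected graph on at least four vertices always leaves a $2$-connected graph. Connectivity of $\lnode'$ is immediate, as $\lnode$ is $3$-edge-connected. For the absence of a cut vertex $v$, observe that $\lnode - v$ is $2$-connected (vertex-connectivity drops by at most one under vertex deletion, and $|V(\lnode - v)|\ge 3$), hence $2$-edge-connected. If $(r_0,r_1)$ is incident to $v$, then $(\lnode - (r_0,r_1)) - v = \lnode - v$ is connected; otherwise $(\lnode - (r_0,r_1)) - v = (\lnode - v) - (r_0,r_1)$ is connected because $\lnode - v$ is $2$-edge-connected. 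Either way $v$ is not a cut vertex of $\lnode'$, so $\lnode'$ has no cut vertex and is $2$-connected.

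\textbf{Conclusion and main obstacle.} Applying the $2$-connected-face fact to $\lnode'$ then shows that the boundary of \emph{every} face of $\lnode'$, and in particular of $\ofacex$, is a simple cycle, which is exactly the claim. The only delicate point is the $2$-connectivity of $\lnode'$; once that is established the statement follows by a direct appeal to a standard fact, so I expect that verification to be where the actual content lies. A more hands-on alternative, which I would keep in reserve, is to argue directly with the two bounding cycles $C_1 = \partial\oface$ and $C_2 = \partial f'$ and prove, via a Jordan-curve argument using $3$-connectivity, that they meet only in the edge $(r_0,r_1)$ and its endpoints, so that $(C_1 \cup C_2) - (r_0,r_1)$ is a simple cycle; the $2$-connectivity route is preferable precisely because it sidesteps this case analysis.
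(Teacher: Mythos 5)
Your proposal is correct and follows essentially the same route as the paper: the paper's (very terse) argument is also that a non-cyclic face boundary would force a cut vertex $x$ in $\lnode'$, which would yield a separating pair in the $3$-connected graph $\lnode$. You simply spell out the standard facts (edge deletion from a $3$-connected graph on $\ge 4$ vertices leaves a $2$-connected graph; every face of a $2$-connected plane graph is bounded by a cycle) in more detail than the paper does.
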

If not, there would be a cut vertex $x$ in $\lnode'$. 
Then $(r_0, x)$ would form a 
separating pair in $\lnode$, contradicting the fact that it is $3$-connected.
We hereby denote this cycle, that is the boundary of $L'$, by $C$. 
We recall the definition of bridges of a cycle in a graph, which
(roughly) are components of the graph minus the cycle, along with their
attachments to the cycle.
\begin{definition}{\cite{tutte}}
 For a subgraph H of G, a vertex of attachment of H is a vertex of H 
 that is incident with some edge of G not belonging to H. Let J be an 
 undirected cycle of G. We define a bridge of J in G as a subgraph B of G 
 with the following properties:
 \begin{enumerate}
  \item each vertex of attachment of B is a vertex of J .
  \item B is not a subgraph of J.
  \item no proper subgraph of B has both the above properties.
 \end{enumerate}
\end{definition}
A bridge could also be a chord of $C$. We call such bridges as \emph{trivial} bridges, 
and others as \emph{non-trivial} bridges.
We need the following observation for $L'$ and $C$:
\begin{observation}
  Any non-trivial bridge $B$ of $C$ (in the graph $L'$), if one exists, 
  must have at least three vertices of attachment.
\end{observation}
  This holds because if a non-trivial bridge of $C$ in $L'$ had only two vertices
  of attachment on $C$, they would form a separating pair of $L$. 
There is a natural ordering we can give to the vertices of attachments
of a bridge, that is the order in which they are encountered while 
travelling $C$, say clockwise, starting from $r_0$. 
We now define the $\emph{span}$ of a bridge.
\begin{definition}\label{def:bridge_span}
  Let the vertices of cycle $C$ be $\langle u_0=r_0,u_1,u_2 \ldots u_l \rangle$ in
  clockwise order in the given embedding.
  Let $B$ be a bridge of $C$ with vertices of attachment $u_{i_0},u_{i_1},\ldots 
  u_{i_k}$, where 
  $0 \leq i_0 < i_1 \ldots i_k \leq l$.
  Then the set of vertices $\{u_{i_0},u_{i_0+1},\ldots u_{i_k}\}$ 
  is called the \emph{span} of bridge $B$ with respect to $r_0$ (or just \emph{span 
  of $B$} for brevity), denoted by $\lspan{B}$. 
  The set of vertices $\{u_{i_k},u_{i_k+1}\ldots u_l, u_0,\ldots u_{i_0}\}$ 
  is called the \emph{complement span} of bridge $B$ (with respect to $r_0$). 
  We say that bridge $B$ has minimal span if for any other bridge $B'$ of $C$,
  $\lspan{B'} \nsubseteq \lspan{B}$.  
\end{definition}
Basically, the vertices of attachment of a bridge divide $C$ into segments. 
The span of a bridge consists of vertices of $C$, exluding the ones in the
segment containing $r_0$, while the complement span consists of the segment
containing $r_0$. To disambiguate the case when $r_0$ itself is a vertex of
attachment of a bridge $B$, note that the definition uses the convention of
putting the segment adjacent to $r_0$ on the clockwise side into the span of
$B$, and the segment adjacent to $r_0$ on the counter clockwise side into the 
complement span of $B$.
Since $G$ is planar and $C$ is outer boundary, 
the spans of bridges form a laminar family. 
\begin{observation}\label{obs:span_laminar}
  Let $B_1,B_2$ be two bridges of $C$, lying in the interior of $C$. Then either
 $\lspan{B_1} \subset \lspan{B_2}$, or 
 $\lspan{B_2} \subset \lspan{B_1}$, or
 $\lspan{B_1} \cap\lspan{B_2} = \emptyset$.
\end{observation}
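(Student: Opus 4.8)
The plan is to prove \cref{obs:span_laminar} by exploiting planarity together with the fact that $C$ is the outer boundary of $L'$, which forces every bridge to lie in the interior of $C$ (the disk bounded by $C$). First I would set up the topological picture: the cycle $C$ bounds a closed disk $D$ in the plane, and since $C$ is the outer face boundary of $L'$, all of $L'$ — in particular every bridge — is drawn inside $D$. Each bridge $B$ attaches to $C$ only at its vertices of attachment, which are points on the boundary circle $\partial D$. The span $\lspan{B}$ is precisely the clockwise arc of vertices running from the first to the last attachment point of $B$ (with the $r_0$-convention fixing the two boundary segments), so I would think of $\lspan{B}$ as a (closed) circular arc on $C$ whose endpoints are the extreme attachment vertices $u_{i_0}, u_{i_k}$ of $B$.

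Next I would argue by contradiction: suppose two interior bridges $B_1, B_2$ have spans that are neither nested nor disjoint. Writing the two arcs as $[a_1,b_1]$ and $[a_2,b_2]$ in clockwise order along $C$, "neither nested nor disjoint" means the arcs \emph{interleave} — one endpoint of $B_2$ lies strictly inside the open arc of $B_1$ and the other strictly outside it (and symmetrically). This interleaving gives four distinct points on $\partial D$ in the cyclic order $a_1, a_2, b_1, b_2$, where $a_1, b_1$ are attachment vertices of $B_1$ and $a_2, b_2$ are attachment vertices of $B_2$. The key step is then to produce, inside $D$, a path $Q_1$ through $B_1$ joining $a_1$ to $b_1$ and a path $Q_2$ through $B_2$ joining $a_2$ to $b_2$; such internal connecting paths exist by the very definition of a bridge (a bridge is connected, and its attachment vertices are joined through its interior). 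Because $B_1$ and $B_2$ are distinct bridges they share no interior vertex and no edge, so $Q_1$ and $Q_2$ can be taken vertex-disjoint except possibly where both meet $C$.

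The crux of the argument is a standard planarity (Jordan-curve) contradiction: $Q_1$ together with the arc of $C$ between $a_1$ and $b_1$ forms a closed curve $\gamma_1$ in the plane; since $a_2$ and $b_2$ separate $a_1$ and $b_1$ on the boundary circle, the points $a_2$ and $b_2$ lie on opposite sides of $\gamma_1$. Hence any arc joining $a_2$ to $b_2$ inside $D$ must cross $\gamma_1$; it cannot cross the $C$-portion (both $Q_2$ and $C$ are part of the plane graph and edges don't cross in a planar embedding), so it must cross $Q_1$, contradicting the disjointness of $Q_1$ and $Q_2$. I expect this Jordan-curve crossing step to be the main obstacle to state carefully, since it requires pinning down that $Q_1, Q_2$ live in the open interior of $D$ and meet $C$ only at their endpoints, and that two interior-disjoint bridges really do yield non-crossing internal paths in the fixed embedding. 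Once the crossing contradiction is established, the interleaving case is impossible, so the spans must be nested or disjoint, which is exactly the laminar conclusion. I would also note the boundary bookkeeping around $r_0$: the span convention places the segment clockwise-adjacent to $r_0$ into the span and the counter-clockwise segment into the complement span, so when $r_0$ is itself an attachment vertex the arcs are still well defined and the same crossing argument applies verbatim.
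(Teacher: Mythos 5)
Your argument is correct and is exactly the standard Jordan-curve elaboration of the reasoning the paper itself invokes (the paper offers no explicit proof beyond asserting that laminarity "can be seen" from planarity and $C$ being the outer boundary, with all bridges drawn inside the disk bounded by $C$). The reduction of "neither nested nor disjoint" to interleaved endpoints on the circle (valid because cutting the circle at $r_0$ linearizes the spans) and the crossing contradiction between the two internal bridge paths is precisely the intended justification, so your proposal matches the paper's approach.
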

Now we define the \emph{second-leftmost} path of a bridge of $C$.
\begin{definition}\label{def:sleftm_path}
  Let $B$ be a non-trivial bridge of cycle $C$, with vertices of attachment
  $u_{i_0},u_{i_1}, u_{i_2}\ldots u_{i_k}$ according to the ordering specified
  above. 
  The $\emph{second leftmost}$ path of $B$ starting from $u_{i_0}$, is
   the one obtained by the following procedure:
   \begin{itemize}
    \item Initialize $u_{i_0}$ as current vertex and $u_{i_0+1}$ as 
      its $\emph{parent}$.
    \item At any step, let the current vertex be $v$, and its parent be $w$.
      Let $z$ be the next vertex occuring after $w$ \emph{other than the vertex
      $u_{i_1}$}, in the clockwise 
      ordering of neighbours of $v$ in the embedding. 
      Travel to $z$ and make it the current vertex. Make $v$ its parent.
    \item Repeat the above step, until either a vertex repeats in the traversal, or a
      vertex of $C$ (after the initial occurence of $u_{i_0}$) is reached.
   \end{itemize}
   We denote this path by $\sleftm{B}{u_{i_0}}$.
\end{definition}
\begin{figure}\label{fig:maximal_tricon}
\begin{minipage}{\textwidth}
  \includegraphics[scale=0.7]{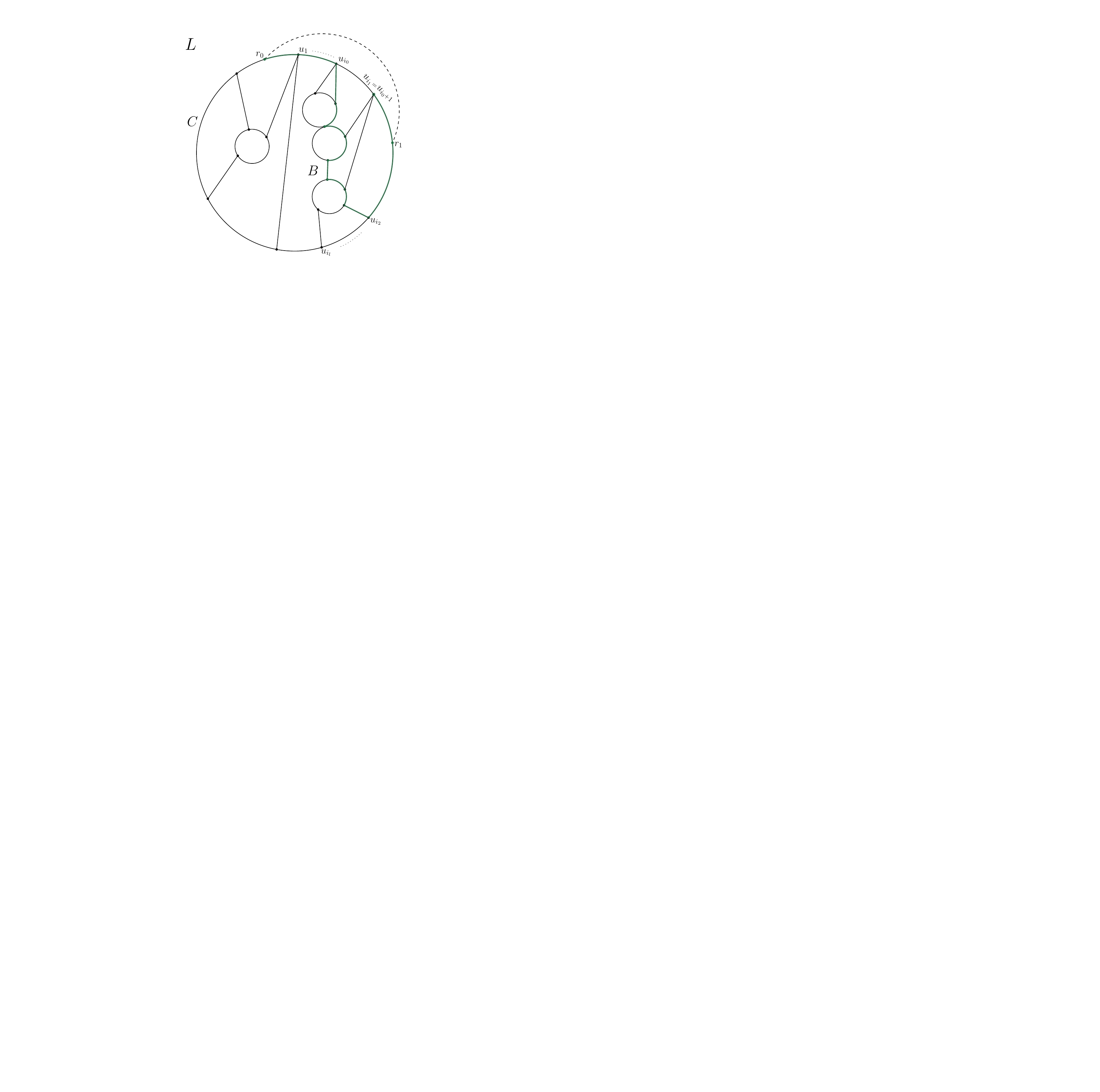}
  \caption{The $3$-connected graph $L$. The dashed edge between $r_0,r_1$ is 
    a virtual edge. The boundary of $L-(r_0,r_1)$ is the cycle $C$. 
    The bridge on the right side is the labelled as $B$, with vertices of attachment 
    $u_{i_0}, u_{i_1}, u_{i_2}, \ldots u_{i_l}$. The maximal path is highlighted 
    in green, starting from $r_0$, going till $u_{i_0}$, then taking the second 
    leftmost path in $B$ to reach $u_{i_2}$, and then encircling back towards 
    $u_{i_0}$, stopping at its neighbour $u_{i_0+1}$ (In this figure, $u_{i_1}$ 
    is same as $u_{i_0+1}$).}\label{fig:maximal_tricon}
\end{minipage}
\end{figure}

We next state a lemma that the second-leftmost path of a non-trivial bridge, 
starting at $u_{i_0}$,
is indeed a simple path that ends at $u_{i_2}$ because $L$ is $3$-connected.
\begin{restatable}{lemma}{sleftPath}\label{lem:sleft_path}
  Given a non-trivial bridge $B$ of cycle $C$, with vertices of attachment $u_{i_0},u_{i_1},
  u_{i_2}\ldots u_{i_k}$ as described above, the following hold:
  \begin{enumerate}
    \item $\sleftm{B}{u_{i_0}}$ is a simple path that ends at $u_{i_2}$.
    \item All neighbours of $u_{i_1}$ that lie in $B$, also lie on
      $\sleftm{B}{u_{i_0}}$.
  \end{enumerate}
\end{restatable}
\begin{proof}
   \begin{enumerate}
   \item We first show that a vertex cannot repeat in walk described for $\sleftm{B}{u_{i_0}}$. 
    Suppose $v_a$ is the first vertex that repeats. We can then write the walk
    till the first repetition as
    $\sleftm{B}{u_{i_0}} = u_{i_0},v_1,v_2, \ldots v_{a-1}, v_a,v_{a+1} \ldots v_b, v_a$.
    Consider the cycle $C' = v_a,v_{a+1} \ldots v_b, v_a$ formed by the walk.
    It must be a non-trivial cycle, for if it were just an edge, with $v_{a+2}=v_a$,
    then $v_{a+1}$ would be a pendant vertex.
      Since $C'$ is contained in the interior of $C$, and it is traversed starting
    from the parent edge $(v_{a-1},v_a)$ which lies in the exterior of $C'$,
    the only edges other than $(v_{a-1},v_a)$ incident to $C'$, and lying in
    its exterior are (possibly) the ones with $u_{i_1}$ as the other end point 
    (see~\cref{fig:maximal_sleftm}).
    If any edge other than these were incident to $C'$ from its exterior, 
    then we would have traversed
    that edge instead of $C'$ by our definition of $\sleftm{B}{u_{i_0}}$.
    This implies that either $v_a$ is a cut vertex for $C'$, or $(v_a,u_{i_1})$
    is a separating pair disconnecting $C'$ from $C$, violating that $L$ is $3$-connected.
    Therefore the walk described for $\sleftm{B}{u_{i_0}}$ must end at a vertex
    of $C$. Next we show that the vertex it ends at is $u_{i_2}$. 
    It can clearly not end at $u_{i_1}$ by definition of the path.
    Therefore it must end at one of $u_{i_2},u_{i_3},\ldots u_{i_k}$. Suppose
    it ends at $u_{i_x}$, where $i_x>i_2$. Then $\sleftm{B}{u_{i_0}}$ would form a
    closed loop along with the segment $\segr{u_{i_0}}{u_{i_x}}$ of $C$. Since
    $u_{i_2}$ is a vertex of attachment of $B$, there must be a path 
    from $u_{i_2}$ to $\sleftm{B}{u_{i_0}}$ in the interior of this loop which
    again is not possible by definition of the path, as we always took the 
    clockwise first choice as long as it doesn't arrive at $u_{i_1}$.  
    Therefore the end point of $\sleftm{B}{u_{i_0}}$ in $G'$ 
    must be $u_{i_2}$.
   \begin{figure}
\begin{minipage}{\textwidth}
  \includegraphics[scale=0.7]{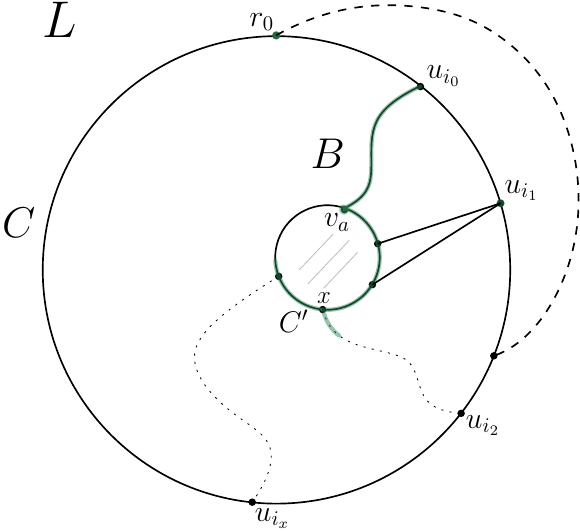}
  \caption{ The graph $L$ with $C$ as the outer boundary of $L'$ (the dashed
    edge outside the outer face is the virtual edge $(r_0,r_1)$). The second leftmost
    walk $\leftm{B}{u_{i_0}}$ is highlighted in green. If the vertex $v_a$ were
    to be repeated in the walk, then edges shown in dotted, like the one
    incident on $x$ from exterior of $C'$ cannot be present, else the walk would
    have branched away from $C'$ at $x$. In that case, $(v_a,u_{i_1})$ clearly
    form a separating pair of $L$.
}\label{fig:maximal_sleftm}
\end{minipage}
\end{figure}
  \item This again follows from the fact that $\sleftm{B}{u_{i_0}}$ forms a closed
    loop with the segment $\segr{u_{i_0}}{u_{i_2}}$. Since $u_{i_1}$ lies on the
    outer boundary $C$of $G'$, its neighbours must lie on or inside this closed
    loop. The vertices that are common between this closed loop and $B$ are
    exactly those of $\sleftm{B}{u_{i_0}}$, and no vertex of $B$ can lie in the
    interior of this closed loop by the above proofs. Therefore the claim holds. 
\end{enumerate}

\end{proof}
Since we know that $\sleftm{B}{u_{i_0}}$ is a simple path starting and ending at $C$,
it is clear that we can perform the traversal to construct it in $\L$, 
by just storing the
current vertex, the parent vertex, and the vertex $u_{i_1}$ at each step.
The reason we choose the \emph{second leftmost path} in the bridge is so that 
after branching from the cycle $C$ into its interior at $u_{i_0}$, we can 
emerge back at $C$ at $u_{i_2}$, and then 
trap the maximal path in the segment $\segl{u_{i_2}}{u_{i_0}}$ of $C$. 
The existence of the unvisited vertex $u_{i_1}$ ensures that the segment 
is non-empty and we can do so.
In the case when the bridge $B$ is a chord, we define its second leftmost path 
as just the chord itself.

We now describe the algorithm to compute maximal path in $L$ starting 
from $r_0$, with a small exception that we will handle subsequently.
\begin{enumerate}
  \item Compute the bridges of $C$ (w.r.t. $L'$) that lie in the
    interior of $C$ by computing the components of $G-C$ along with their 
    vertices of attachment to $C$. 
  \item From $r_0=u_0$, extend the path by going clockwise till we find the 
    first point of attachment 
    of a bridge $B$ of $C$ with minimal span. 
    Let the vertices of attachment of $B$, in clockwise order starting from $r_0$
    be $u_{i_0},u_{i_1},u_{i_2}, \ldots u_{i_k}$ (where ${i_k} \geq 2$).
  \item From $u_{i_0}$, continue along the path $\sleftm{B}{u_{i_0}}$ till 
    we reach $u_{i_2}$ (we will reach $u_{i_2}$ by~\cref{lem:sleft_path}). 
    From $u_{i_2}$, extend this
    path by traversing the segment 
    $\segl{u_{i_2}}{u_{i_0+1}}$, 
    till the vertex $u_{i_0+1}$. Suppose that $u_{i_0+1} \neq r_1$. Then this 
    path is a maximal path.
\end{enumerate}

To see that this path is maximal (assuming $u_{i_0+1} \neq r_1$), 
note that no other bridge of $C$ can be 
enclosed inside $\sleftm{B}{u_{i_0}}$ and the segment 
$\segl{u_{i_2}}{u_{i_0}}$ of $C$,
since we chose $B$ with a minimal span. 
There are two possible cases, either $u_{i_0+1}$ has no neighbour in the 
interior of $C$ in which case we are done. 
The only other possiblity, by the above argument is that $B$ is a non-trivial 
bridge and $u_{i_0+1}=u_{i_1}$ ($B$ cannot be a chord between $u_{i_0}$ and
$u_{i_0+1}$ as we assume input graph to be simple).
In this case, by~\cref{lem:sleft_path}, its neighbours inside $C$ are all visited in 
$\sleftm{B}{u_{i_0}}$, and we are done.
Now we handle the other case.

\paragraph*{Case : $u_{i_0+1}=r_1$}
To handle this case, we note that in the algorithm we saw for previous case, 
we only need \emph{some} bridge with minimal span, not necessarily the first one we
encounter while travelling clockwise on $C$ from $r_0$. So the first step we do
is to check if there exists another bridge $B'$ with minimal span (hence
disjoint from the span of $B$). 
\begin{itemize} 
  \item \textbf{Subcase 2.1} : There exists another bridge $B'$ with minimal span.\\
    We travel from $r_0$ clockwise to the first
vertex of attachment of $B'$ and repeat the same algorithm as in previous case. 
It is easy to see that the new ending vertex we obtain by this tweak 
cannot be the same as $r_1$, since the spans of $B,B'$ 
are disjoint and $r_1$ lies strictly inside the span of $B$.

  \item \textbf{Subcase 2.2} : There exists no bridge other than $B$ with minimal span.\\
    By~\cref{obs:span_laminar}, this implies that the spans of all bridges of $C$ form  
a linear order in the subset relation. Then the complement spans of all these
bridges must also have a linear order. In this case we travel clockwise along $C$ 
starting from $r_0$, until we reach the first vertex $u'_{i_0}$ which is a
vertex of attachment of the bridge $B'$ with minimum complement span. From
$u'_{i_0}$, with $u'_{i_0-1}$ as the parent (or $u_l$ as the parent in case
$u'_{i_0}=r_0$), we take the \emph{second rightmost
path} (definied similarly as the second leftmost path with appropriate tweaks)
to reach a vertex $u'_{i_2}$ on $C$ (see~\cref{fig:maximal_tricon_case2}).
Since $u_{i_0}$ is the first vertex of attachment of $B'$ clockwise from 
$r_0$, $u'_{i_2}$ cannot be the same as $u_l$. From $u'_{i_2}$, we traverse 
the segment $\segr{u'_{i_2}}{u_l}$ and finish at vertex $u_l$. 
Note that $u_l$ lies in the complement span of $B'$.
The corresponding statement of~\cref{lem:sleft_path} for second rightmost
path holds with a similar proof.
The argument that the path constructed here is a maximal path therefore is also 
similar to the one presented in~\cref{sec:maximal_path_undir} for the previous
case. 
\begin{figure}[t]
\begin{minipage}{\textwidth}
  \includegraphics[scale=0.7]{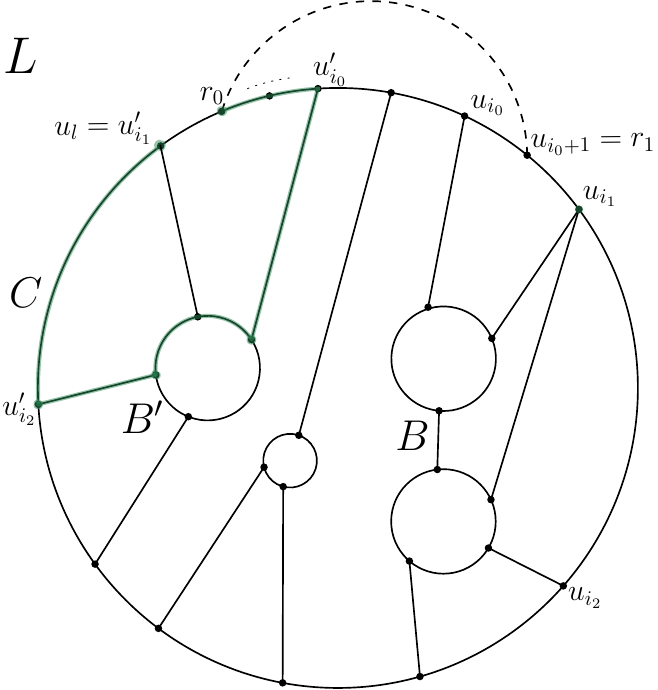}
  \caption{The $3$-connected graph $L$. The dashed edge between $r_0,r_1$ is 
    a virtual edge. The boundary of $L-(r_0,r_1)$ is the cycle $C$. 
    If we take the second leftmost path $\sleftm{B}{u_{i_0}}$ along the bridge $B$, 
    we end at the vertex $u_{i_0+1}$ which is the same as $r_1$, which is
    forbidden for us. We therefore consider the bridge $B'$ which has the
    minimum \emph{complement span}. From its first vertex of attachment
    $u'_{i_0}$ clockwise from $r_0$, we take the second leftmost path
    (highlighted in green) in $B'$ to reach $u'_{i_2}$, and then encircle back
    clockwise towards $r_{0}$, stopping at its neighbour $u_{l}$. 
    In this figure, $u_{1}$ also happens to be a vertex of attachment of $B'$
    (hence marked same as $u'_{i_1}$). Its neighbours in $B$ are already visited
    in the second rightmost path as seen in the figure.}\label{fig:maximal_tricon_case2}
\end{minipage}
\end{figure}
Since $u_l$ lies in complement span of $B'$, and $r_1$ lies strictly inside the
span of $B$, $u_l$ must be disctinct from $r_1$. 
\end{itemize}
We can sequentially go through the cases in $\Log$ using constant number of transducers. 
This finishes the proof.

Each of the steps in the algorithm like computing bridges, checking for minimal
span, can be done in $\Log$ using constantly many transducers
and~\cite{Reingold08}.
And as explained above, computing the required segments of $C$, and the 
second leftmost path, can also be done in $\Log$. 
Hence the entire algorithm is in $\Log$.

\bibliography{references}
\appendix
\section{Appendix}
\subsection{Appendix for~\cref{sec:path_separators}}\label{app_path_separators_app}
\subsubsection{Proof of~\cref{thm:Kao}}\label{app:thm_Kao_proof} 
We restate the theorem here:
\thmKao*
\begin{proof}
  Let the two paths that together form an $\alpha$ separator be
  $P_1= \{u_1,u_2 \dots u_p\}$ and $P_2 = \{v_1,v_2\ldots v_q\}$.
  \begin{itemize}
    \item Trim the path $P_1$ to $P'_1 = \{u_1,u_2 \dots u_s\}$ $(s \leq p)$
      such that separator property is just lost on trimming away $u_{s-1}$. That is, $P'_1
      \cup P_2$ is an $\alpha$-separator of $G$, but $P'_1\backslash \{u_s\} \cup P_2$
      is not. This means that in $G - (P'_1\backslash \{u_s\} \cup P_2)$,
      the vertex $u_{s}$ is
      a part of a strongly connected component of size at least $\alpha n$.
    \item Having trimmed $P_1$ to $P'_1$, repeat the same step and trim $P_2$ to
      $P'_2 = \{v_t,v_{t+1} \dots v_q\}$ ($t\geq 1$) such that
      $P'_1 \cup P'_2$ forms an $\alpha$-separator, and the vertex
      $v_t$ is part of some strongly connected
      component of size at least $\alpha n$
      in the graph $G - (P'_1 \cup P'_2\backslash \{v_t\})$.
    \item Thus in $G - (P'_1\backslash \{u_s\} \cup P'_2\backslash \{v_t\})$, there
      are two strongly connected components of size at least
      $\alpha n$ ($\alpha \geq 1/2$), containing $u_s, v_t$ resepectively.
      This means that they must be part of a single strongly connected component
      and hence there must be a path $P$ from $u_s$ to $v_t$ in
      $G - (P'_1\backslash \{u_s\} \cup P'_2\backslash \{v_t\})$. Therefore we can
      concatenate the paths $P'_1, P, P'_2$ into a single path $P'_1.P.P'_2$,
      which is an $\alpha$-separator of $G$.
  \end{itemize}
  While trimming paths in steps $1$ and $2$, we only need to remember the current end point
  of the trimmed path, which can be done in $\Log$.
  Other steps, like checking the size of leftover connected components
  and
  finding a path from $u_s$ to $v_t$ in the subgraph can be done in $\Log$ given
  an oracle to $\reach$.
\end{proof}  

\subsection{Appendix for~\cref{sec:sep_scmfree}}\label{app:sep_scmfree}

We demonstrate the argument for correctness of the gadget in Case 2 here. 
The argument is similar for the other 2 cases. We restate the lemma: 

\gadgetCase*

\begin{figure}
\begin{minipage}{\textwidth}
  \includegraphics[scale=0.7]{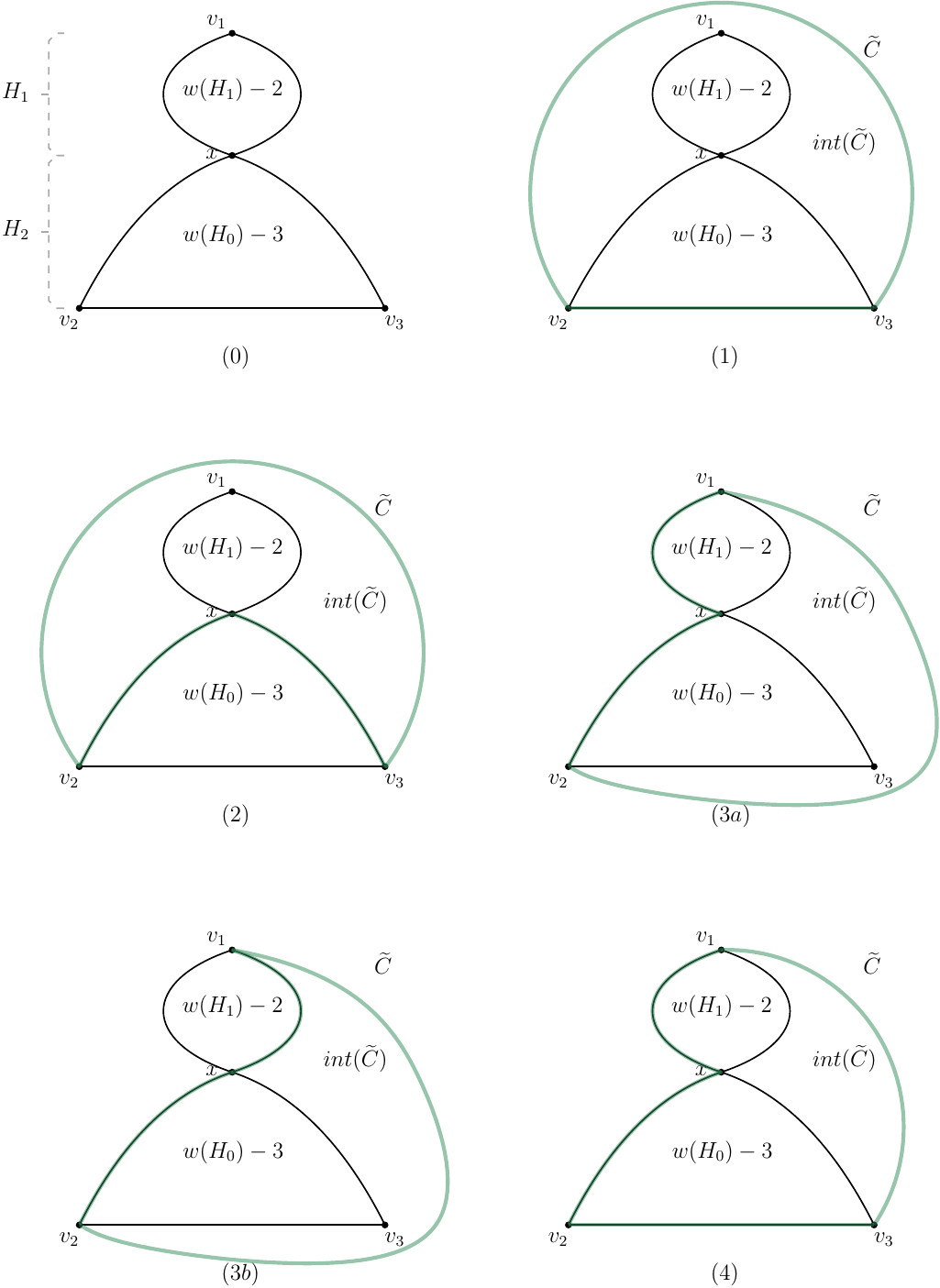}
  \caption{Figure (0) shows the gadget for the graph $G_{2}$ for Case 2. 
    The values $2,3$ are subtracted from weight of the faces because the 
    vertices $v_1,v_2,v_3,x$ have a weight of one themselves. 
    Note that $\wt(H_1)+\wt(H_2) = \wt(G_{2})+1$ (since $x$ is shared by 
    $H_1$ as well as $H_0$). 
    The subgraph $H_0$ of $G_2$ has no cut vertices. 
    The remaining figures show 
    the possible ways the cycle $\clos{C}$ can use the segments of the 
    gadget. They correspond to the subcases $1-4$ of the argument to show the 
    correctness of the gadget in Case 2.}\label{fig:gadget_proof}
\end{minipage}
\end{figure}

\begin{proof}
Suppose $\clos{C}$ uses some segment of this gadget. 
Without loss of generality, we can assume that each of $v_1,v_2,v_3$ lies on 
$\clos{C}$, or in the interior of $\clos{C}$. Therefore $G_{2}$ is an
interior component of $\Tc$ with respect to $\clos{C}$.
We need to argue two things. First, that for the segments of the gadget used by
$\clos{C}$, we can substitute simple paths in $G_{2}$.
Second, that after the substitution, the new cycle remains a balanced separator. 
For this, it is sufficient to show that the weight contribution of $G_{2}$ to
the interior components is at most the weight contribution of the gadget to
$\wt(int(\clos{C}))$, and the weight contribution of $G_{2}$ to the exterior
components is at most the weight contribution of the gadget to 
$\wt(ext(\clos{C}))$ (we will drop the term `weight' and just use contribution
of $G_{2}$/gadget hereby for brevity).
This is because by the property of cycle separator, both 
$\wt(int(\clos{C})), \wt(ext(\clos{C}))$ are already guaranteed to be at most
$2n/3$, and as explained above, there can be no path from any interior component
of $\Tc$ in $G-V(\clos{C})$ to any exterior component of $\Tc$ in
$G-V(\clos{C})$.

Since $G_{2}$ is an interior component, its contribution to exterior components
of $\Tc$ is $0$ regardless of which segment of the gadget is taken by $\clos{\Tc}$. 
Therefore we only have to argue about its contribution to
interior components. 

Also note that in any case, if all three of $v_1,v_2,v_3$ lie on the cycle
$\clos{C}$, then also the contribution of $G_{2}$ is $0$ to both interior as
well as exterior components of $\Tc$, as $G_{2}$ gets disconnected from 
rest of the graph on removing vertices of $\clos{C}$. Therefore for the second
part of the proof about the cycle remaining a balanced separator, we only have
to consider subcases when one of $v_1,v_2,v_3$ lies in the strict interior of
$\clos{C}$.

Now we look at the subcases of which segment of the gadget is used by $\clos{C}$.
There are four possible subcases 
(up to symmetry):
\begin{enumerate}
  \item $\clos{C}$ uses only the edge $(v_2,v_3)$ from the gadget 
    (see subfigure $(1)$ in~\cref{fig:gadget_proof}).
    Replace the edge 
    $(v_2,v_3)$ with any $v_2\text{-} v_3$ path in $G_{2}$ in the new cycle 
    $C$. It will clearly remain a simple cycle. 
    By the explanation above, we can assume that $v_1$ lies in 
    the \emph{strict} interior of $\clos{C}$.
  Then the contribution of the gadget to $\wt(\clos{C})$ is $\wt(G_{2})-2$,  
    which is at least as much as contribution of $G_{2}$ to interior components 
    of $\Tc$. 
    Therefore if $\clos{C}$ is an $\alpha$ \ie-cycle separator of $\clos{\Tc}$, then 
    $C$ is certainly an $\alpha$ cycle separator of $\Tc \oplus G_{2}$.
  \item $\clos{C}$ uses the segment $v_2$-$x$-$v_3$ from the gadget 
    (see subfigure $(2)$ in~\cref{fig:gadget_proof}). We replace the 
    segment by a path in $G_{2}$ from $v_2$ to $v_3$ via $x$ 
    Such a path must exist as the block of $G_2$ corresponding to $H_0$ 
    is biconnected by our construction. 
    The contribution of the gadget to interior of $\clos{C}$ is $\wt(H_1)-1$,  
    which is at least as much as contribution of $G_{2}$ to interior components 
    of $\Tc$ (Since $x$ also lies in $C$, and removing $C$ would
    disconnect vertices of $H_0$). 
  \item $\clos{C}$ uses segment $v_1$-$x$-$v_2$ from the gadget 
    (see subfigures $(3a), (3b)$ in~\cref{fig:gadget_proof}). 
    It may use either one of the parallel edges between $v_1,x$. 
    We replace the segment by any path in $G_{2}$ from $v_1$ to $v_2$ (they all
    must go via $x$). 
    The contribution of the gadget to interior of $\clos{C}$ is at least 
    $\wt(H_0)-2$ (plus $\wt(H_1)-2$ possibly , depending on which of the parallel 
    $(v_1,x)$ edges is taken), whereas the contribution of $G_{2}$ to interior 
    components of $\Tc$ is at most $\wt(H_0)-2$ since both $x,v_1$ are removed.
  \item $\clos{C}$ uses segment $v_1$-$x$-$v_2$-$v_3$ 
    (see subfigure $(4)$ in~\cref{fig:gadget_proof}). 
    We replace the segment by a $v_1$-$x$ path in $G_2$, concatenated with 
    a path from $x$ to $v_3$ via $v_2$. There must exist such a path 
    in $G_2$ by the same reasoning as described above.
    Since $v_1,v_2,v_3$ all lie in $C$ in this case, the 
    contribution of $G_{2}$ is $0$ to both interior and exterior components 
    of $\Tc$.
\end{enumerate}
Therefore in all cases, we get a simple cycle by replacing the segments as
described, and the new cycle $C$ remains a balanced separator.
\end{proof}

\subsection{Appendix for~\cref{sec:DFS_btw}}\label{app:DFS_btw_app}
For brevity, we will use some shorthands in formulae like $ \exists !v \ldots$ for
`there exists exactly one $v$ such that$\ldots$', which are known to be expressible 
in $\mso$.
We define some more formulae and construct
the required expressions with explanations below.
\begin{enumerate}
  \item We first note that $\phiedge{P}{u}{v}$ which is true iff edge 
    $(u,v) \in P$, can be expressed as:
    \begin{equation*}
      \phiedge{P}{u}{v} : \exists e(\tail{e}{u}\wedge \head{e}{v} \wedge e\in P)
  \end{equation*}
  \item We define $\phiconn{P}$, which is true iff the edges in $P$ form a weakly connected 
    graph. It is well known that $\phiconn{P}$ as expressible 
    in $\mso$ so we skip the exact formula here (see for
    example~\cite{Jerabek.se}). 
  \item We define $\phisub{P_1}{P_2}$, which is true iff the set of edges of
    $P_1$ is a subset of set of edges of $P_2$, i.e., $\forall e (e\in P_1
    \Rightarrow e\in P_2)$.
  \item $\phiep{P}{v}$ can be written as:
    \begin{equation*}
      \phiep{P}{v} : \exists !u ( \phiedge{P}{u}{v}) \wedge 
                     \nexists w(\phiedge{P}{v}{w})
    \end{equation*}

     \item We can express $\phipath{P}$ as :
    \begin{equation*}
      \begin{split}
      \phipath{P} : & \phiconn{P} \wedge
      \exists !v (v \neq r \wedge \phiep{P}{v}) \wedge \\
      & \forall u ( \exists e \in P(\tail{e}{u}\lor \head{e}{u})
	    \wedge \neg \phiep{P}{u} \Rightarrow \\
      &   \exists !x,\exists !y (x\neq y \wedge \phiedge{P}{x}{u} \wedge \phiedge{P}{u}{y}))
    \end{split}
    \end{equation*}
    The last line says that for every vertex $u$ in $P$ that is not either
    of the end points $r,v$, it has exactly one in-neighbour and exactly one
    out-neighbour in $P$.
  \item Using above, we can express $\philex{P_1}{P_2}$ by saying that it holds
    iff both $\phipath{P_1},\phipath{P_2}$ hold, and there exist subpaths
    $P'_1,P'_2$ of $P_1,P_2$ respectively starting from $r$, which diverge only
    in their last edges, with $P'_1$ diverging to a vertex lesser in $\ord$ than
    the vertex $P'_2$ diverges to. We can express this as follows:
    \begin{equation*}
      \begin{split}
       \philex{P_1}{P_2} : & \phipath{P_1} \wedge \phipath{P_2} \wedge \\
       & \exists P',u_d,u_1,u_2 (u_1\neq u_2 \wedge \phipath{P'} \wedge
          \phiep{P'}{u_d} \wedge \phisub{P'}{P_1} \wedge \\
	  &  \phisub{P'}{P_2} \wedge \phiedge{P_1}{u_d}{u_1} \wedge 
	   \phiedge{P_2}{u_d}{u_2} \wedge u_1 \lord{\ord} u_2)
      \end{split}
    \end{equation*}
\end{enumerate}
This gives all expressions needed for $\phidfs{u}{v}$.

\end{document}